\theoremstyle{plain}
\newtheorem{thm}{Theorem}[section]
\newtheorem{op}[thm]{Proposition}
\newtheorem{cor}[thm]{Corollary}
\newtheorem*{thm*}{Theorem}
\newtheorem*{lemma*}{Lemma}
\newtheorem*{prop*}{Proposition}
\newtheorem*{cor*}{Corollary}
\newtheorem*{conj*}{Conjecture}
\theoremstyle{definition}
\newtheorem{ex}[thm]{Example}
\theoremstyle{remark}
\newcommand{\ind}{\mbox{$\perp \kern-5.5pt \perp$}}
\begin{document}

\title[Identifiability and Reconstructibility Under a Modified Coalescent]
{Identifiability and Reconstructibility of Species Phylogenies Under a Modified Coalescent}
\author{Colby Long and Laura Kubatko}

\begin{abstract}
	Coalescent models of evolution account for incomplete 
	lineage sorting by specifying a species tree parameter which 
	determines a distribution on gene trees. 
	It has been shown that the unrooted topology of the species
	tree parameter of the multispecies coalescent is 
	generically identifiable. Moreover, a statistically consistent 
	reconstruction method called SVDQuartets has been 
	developed to recover this parameter. In this paper,
	we describe a modified 
	multispecies coalescent model that allows
	for varying effective population size and violations of the 
	molecular clock. We show that the unrooted topology 
	of the species tree for these models is generically
	identifiable and that SVDQuartets is still a statistically 
	consistent method for inferring this parameter.
\end{abstract}

\maketitle

%
%
%

\section{The multispecies coalescent}

The goal of phylogenetics is to reconstruct the evolutionary history
of a group of species from biological data. Most often, the data available are the aligned DNA sequences of the species under consideration. The descent of these species from a common 
ancestor is represented by a phylogenetic tree which we call
the \emph{species tree}. However, it is well known that due to various biological phenomena, such as horizontal gene transfer and incomplete lineage sorting, the ancestry of individual genes will not necessarily match the tree of the species in which they reside \cite{pamilonei1988,syvanen1994,maddison1997}. There are various phylogenetic reconstruction 
methods that account for this discrepancy in different ways.
One approach is to reconstruct individual 
\emph{gene trees} by some method and them utilize this
information to infer the original species tree \cite{liuetal2009b,liuetal2010,wu2012,mirarabetal2014,mirarabwarnow2015}. 

The multispecies coalescent model incorporates incomplete lineage sorting directly. The tree parameter of the model is the species tree, an $n$-leaf rooted equidistant tree with branch lengths. The species tree yields a distribution on possible
gene trees along which evolution is modeled by a 
$\kappa$-state 
substitution model. 
For a fixed choice of parameters,
the multispecies coalescent returns
a probability distribution on the $\kappa^n$ possible 
$n$-tuples of states that may be observed.
In order to infer the species tree from data, one searches
for model parameters yielding a distribution close to that 
observed, using, for example, maximum likelihood.

In \cite{chifmankubatko2015}, the authors show that
given a probability distribution from the multispecies coalescent model it is possible to infer the topology of the unrooted species tree parameter. 
Restricting the species tree to any 4-element subset of the leaves 
yields an unrooted 4-leaf binary phylogenetic tree called a  \emph{quartet}. 
For a given label set, there are only
three possible quartets which each induce
a flattening of the probability tensor.
Given a probability distribution arising from the multispecies 
coalescent, the flattening matrix corresponding to the quartet
compatible with the species tree will be rank 
${\kappa + 1 \choose 2}$ or less while
the other two will generically have rank strictly
 greater than this value.
Since the topology of an unrooted tree is uniquely determined by
quartets \cite{Semple2003}, these flattening matrices can be used to determine the unrooted topology of the species tree exactly. 
Of course, empirical and even simulated data produced by the multispecies coalescent will only approximate the distribution of the model.
Therefore, the same authors also proposed a method called SVDQuartets \cite{chifmankubatko2014}, which uses singular value decomposition to infer each quartet topology by determining
 which of the flattening matrices is closest to the set of rank ${\kappa + 1 \choose 2}$ matrices.

The method of SVDQuartets offers several advantages over 
other existing phylogenetic reconstruction methods. 
It is statistically consistent, accounts for 
incomplete lineage sorting, and is computationally
much less expensive than Bayesian methods achieving
the same level of accuracy.
It is often under-appreciated that this reconstruction method
can be used to recover the species tree for several
different underlying nucleotide substitution models
without any modifications.
It was shown in \cite{chifmankubatko2015} that the method of SVDQuartets is statistically consistent if the underlying 
model for the evolution of sequence data 
along the gene trees is the 4-state 
general time-reversible (GTR) model
or any of the commonly used submodels thereof 
(e.g., JC69, K2P, K3P, F81, HKY85, TN93).
Thus, the method
does not require any a priori assumptions about the 
underlying nucleotide substitution process other than time-reversibility. 

In this paper, we show that the method of SVDQuartets 
has more theoretical robustness even 
than has already been shown.
We will specifically focus on the case where
the underlying nucleotide substitution model is 
one of the the 4-state models most
widely used in phylogenetics.
We describe several modifications to the 
classical multispecies coalescent model 
to allow for more realistic 
mechanisms of evolution.
For example, we remove the assumption of a 
molecular clock by removing the restriction that the 
species tree be equidistant. 
We also allow the effective population size to vary on each branch
and show how these two relaxations together can 
be interpreted as a scaling of certain rate matrices in the 
gene trees. 
Remarkably, we show that the unrooted species
tree of these modified models is still identifiable and 
that  SVDQuartets is a 
statistically consistent reconstruction 
method for these modified models.
Thus, despite the introduction of several parameters,
 effective and efficient methods for reconstructing
 the species tree of these modified coalescent models are
already available off the shelf and implemented in PAUP$^*$ \cite{Swofford2002}.

In Section \ref{sec: Multi-species Coalescent}, we review the 
classical multispecies coalescent model and 
discuss some of its limitations in modeling certain
biological phenomena. 
We then describe several modifications to the
classical model to remedy these weaknesses.
In Section \ref{sec: Identifiability}, we establish the theoretical properties of identifiability for these families of modified coalescent 
models. Finally, in Section \ref{sec: Conclusions}, we describe why SVDQuartets is a strong candidate for reconstructing the species tree of these models and propose several other modifications that we might make to the multispecies coalescent.

\section{The multispecies Coalescent}
\label{sec: Multi-species Coalescent}

\subsection{Coalescent Models of Evolution}
\label{subsec: Coalescent Models}

In this section, we briefly review the multispecies coalescent
model and explain how the model yields a probability 
distribution on nucleotide site patterns. 
As our main results will parallel those 
found in \cite{chifmankubatko2015}, we will import much of the
notation from that paper and refer the reader there for
a more thorough description of the model.

The Wright-Fisher model from population genetics models the convergence of multiple lineages backwards in time toward a common ancestor.
Beginning with $j$ lineages from the current generation,
the model assumes discrete generations with 
constant effective population size $N$.
In each generation, each lineage is assigned a parent
uniformly from the previous generation. 
For diploid species, there are $2N$ copies of each gene
in each generation, and thus the probability of selecting any 
particular gene as a parent is $\frac{1}{2N}$. 
Two lineages are said to coalesce when they select the same parent in a particular generation.

As an example, if we begin with two lineages in the same species, the probability they have the same parent in the previous generation,
and hence coalesce,
 is $\frac{1}{2N}$ and the probability that they do not coalesce 
 in this generation is 
 $(1 - \frac{1}{2N})$. 
Therefore, the probability that two lineages coalesce in exactly 
the $i$th previous generation is given by 
$$\left ( \frac{1}{2N} \right )  \left (1 - \frac{1}{2N} \right )^{i-1}.$$
For large $N$, the time at which the two lineages coalesce, $t$, follows an exponential distribution with rate $(-2N)^{-1}$, where time is measured in number of generations.
Every $2N$ generations is called a \emph{coalescent unit} and
time is typically measured in these units to simplify
the formulas for time to coalescence.
However, in this paper, we will introduce separate effective
population size parameters for each branch of the species trees.
So that our time scale is consistent across the tree we will work 
in generations rather than coalescent units.
%
%
In these units, for $j$ lineages, the time to the next coalescent event $t_j$ has probability density,



$$f(t_j) = \dfrac{j(j-1)}{2}
\left (
\dfrac{1}{2N}
\right )
\exp \left ( - \dfrac{j(j-1)}{2}
\left (
\dfrac{1}{2N}
\right )
t_j \right ), \ t_j > 0.$$

\noindent This is typically referred to as {\itshape Kingman's coalescent} \cite{kingman1982a,kingman1982b,kingman1982c,tajima1983,tavare1984,takahata1985}.

The multispecies coalescent is based on the same framework, 
but we assume that the species tree of the $n$ sampled taxa
is known. We let $S$ denote the topology (without branch lengths) of the $n$-leaf rooted binary phylogenetic species tree.
We then assign a positive edge weight to each edge of 
$S$. The weights are chosen to given an equidistant edge
weighting, meaning that
the sum of the weights of the branches on the path between 
any leaf and the root is the same. To specify an equidistant
edge-weighting, it is enough to specify the vector of
speciation times $\tau = (\tau_1,\ldots, \tau_{n-1})$.
The weights give the length of each branch and so
we let $(S,\boldsymbol{\tau})$ denote the species tree with branch lengths.
The condition that $S$ be equidistant implies that the
expected number of mutations along the path from the root
to each leaf is the same. This is what is known as
as imposing a molecular clock.

Once this species tree is fixed,
the multispecies coalescent gives a probability density
on possible gene trees.  
All of the same assumptions as before apply, except that it is 
now impossible for two lineages to coalesce if they are 
not part of the same population. Hence, lineages
may only coalesce if they are in the same branch of $S$.
%
%
%
%
%
We denote the topology of
a rooted $n$-leaf binary phylogenetic gene tree by $G$ and
let $\mathbf{t} = (t_1, \ldots, t_{n-1})$ denote the vector of coalescent
times. As for the species tree, $(G,\mathbf{t})$ 
then represents a gene
tree with weights corresponding to
branch lengths.
A coalescent history $h$ refers to a 
particular sequence of coalescent events
as well as the branches along which they occur.
Note that there are only
finitely many possible \emph{coalescent histories} compatible
with $S$
and we call the set of all such histories $\mathcal{H}$.
For a particular history, 
we can compute the probability density for 
gene trees with that history explicitly. 
We write the gene tree density of gene trees conforming
to a particular
coalescent history $h$ in $S$ as
$f_h((G,\mathbf{t})|(S,\boldsymbol{\tau}))$. 

\begin{ex}
\label{ex: SpeciesTree}
Let $S$ be the 4-leaf species tree depicted in  
Figure \ref{fig: SpeciesTree} and let $h$ refer to the coalescent history in which 
\begin{enumerate}
\item Lineages $a$ and $b$ coalesce in $e_{ab}$.
\item Lineages $c$ and $d$ coalesce in $e_{abcd}$.
\item Lineages $ab$ and $cd$ coalesce in $e_{abcd}$.
\end{enumerate}
The gene tree depicted inside of $S$ is compatible with $h$.

\begin{figure}[h]
\caption{A species tree with a compatible gene tree. Speciation times, denoted as $\tau_{X}$, indicate the time from present back to the point at which all species in the set $X$ are part of the same ancestral species.  The coalescent times $t_j$ denote the time back from the most recent speciation event to a coalescent event.}
\label{fig: SpeciesTree}
\begin{center}
\includegraphics[width=7.5cm]{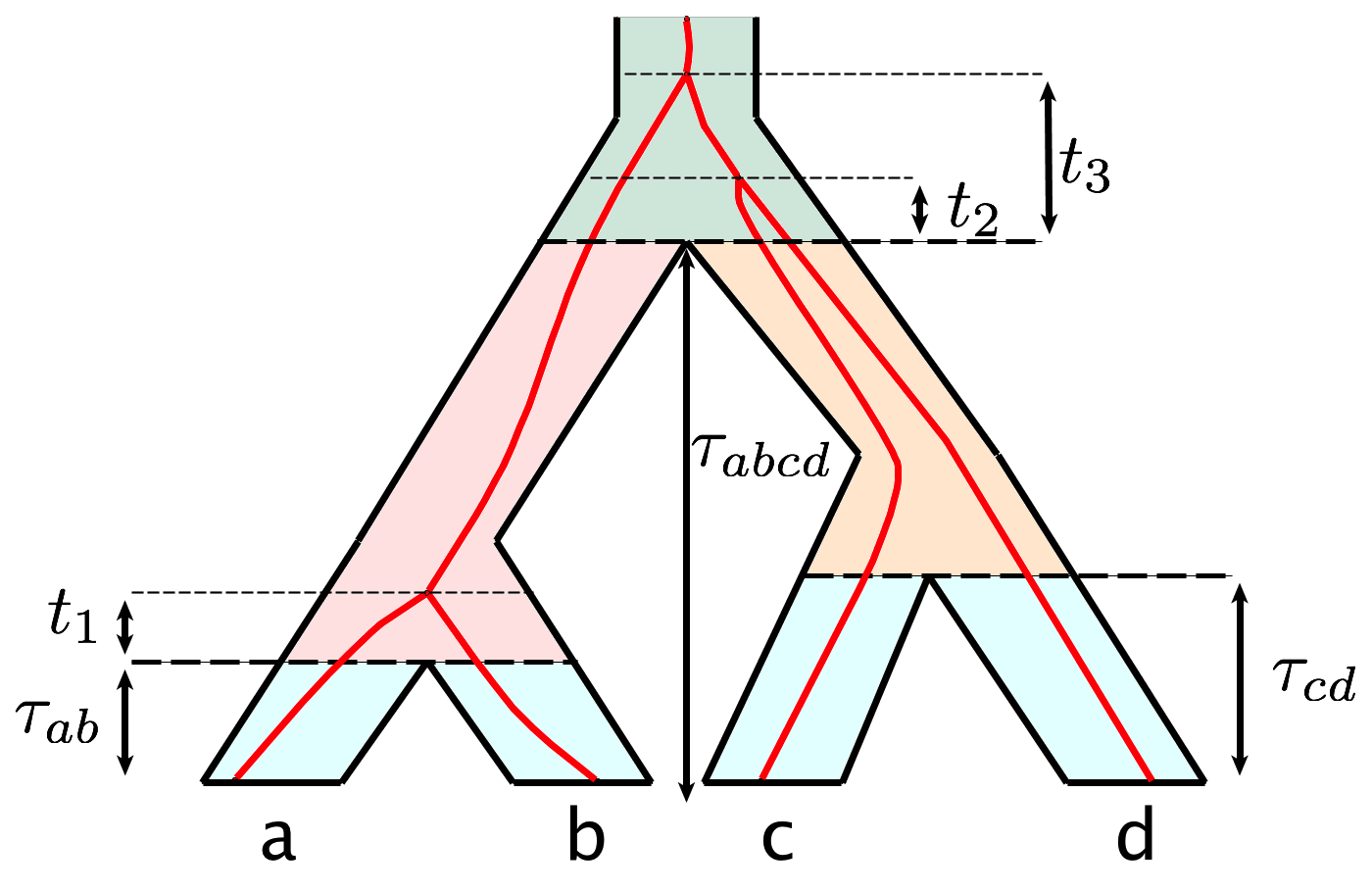}
\end{center}
\end{figure}

The probability of observing a gene tree with history $h$ for the
species tree $S$ under the multispecies coalescent model is
$\int_\textbf{t} \! f_h((G,\mathbf{t})|(S,\boldsymbol{\tau})) d\textbf{t} =$
\medskip
$$
\int_0^{\infty} \!
\int_0^{t_3} \!
\int_0^{\tau_{abcd} - \tau_{ab}} \!
\left(
\dfrac{1}{2N}
\right )^3
\exp \left ( 
\dfrac{-t_1}{2N}
\right )
\exp 
\left ( 
\dfrac{-(\tau_{abcd} - \tau_{cd})}{2N}
\right )
%
\exp \left ( 
\dfrac{-3t_2}{2N}
\right )
%
\exp \left ( 
\dfrac{-t_3}{2N}
\right )
\ 
dt_1
dt_2
dt_3.
$$
\medskip

The term involving $(\tau_{abcd} - \tau_{cd})$ is the probability 
that lineages $c$ and $d$ do not coalesce in the branch $e_{cd}$.
\end{ex}

Once the gene tree for a locus is specified, we model the evolution 
along this gene tree as a continuous-time homogenous Markov 
process according to a nucleotide substitution model. 
The model gives a probability distribution on the set
of all $4^n$ possible $n$-tuples of observed states 
at the leaves of $(G,\mathbf{t})$. 
We can write the probability of observing the 
state $(i_1,\ldots,i_n)$ 
as 
$p^*_{i_1\ldots i_n|(G,\mathbf{t})}$. 
Precisely how this distribution is
calculated is described in \cite{chifmankubatko2015}. 
Here, we sketch the relevant details needed to introduce the 
modified multispecies coalescent model described in the 
next section.

For a $4$-state substitution model, 
there is a $4 \times 4$ instantaneous rate matrix $Q$
where the entry $Q_{ij}$ encodes the rate of 
conversion from state $i$ to state $j$. 
The distribution of states at the root
is $\boldsymbol{\pi} = (\pi_T, \pi_C,\pi_A, \pi_G)$ where 
$\boldsymbol{\pi}$ is the
stationary distribution of the rate matrix $Q$. 
To compute the probability of observing a particular state at the leaves, we associate to each vertex $v$ a random variable 
$X_v$ with state space equal to the set of $4$ possible states. To the root, we associate the random variable $X_r$ where $P(X_r = i) = \pi_i$.
Letting $t_{e}$ be the length of edge $e = uv$, 
$P(t_e) = e^{Qt_{e}}$ is the matrix of transition probabilities along that edge. That is, $P_{ij}(t_e) = P(X_v = j | X_u = i)$.
Given an assignment of states to each vertex of the tree we 
can compute the probability of observing this state using
the Markov property and the appropriate entries of the transition matrices. To determine the probability of observing a particular state at the leaves, we marginalize over all possible states of the
internal nodes.

In this paper, we are primarily interested in 
4-state models of DNA evolution where the 
four states correspond to the DNA bases.
Different phylogenetic models place different
restrictions on the entries of the rate matrices.
The results that we prove in the next section will apply
when the underlying nucleotide substitution model is any
of the commonly used 4-state time-reversible models.
We note here that because these models are 
time-reversible the location of the root in each gene tree
is unidentifiable from the distribution for each gene tree
\cite{felsenstein1981}.
In subsequent sections, we will introduce and describe similar results for the JC+I+$\Gamma$ model, that allows for invariant sites and gamma-distributed rates across sites. 

\begin{figure*}[h]
    \centering
    \begin{subfigure}[t]{0.5\textwidth}
        \centering
      $\begin{pmatrix}
        * & a & b & c  \\
        a & * & c & b  \\
        b & c & * & a  \\
        c & b & a & *  \\
        \end{pmatrix}$
         \caption{Kimura 3-parameter model (K3P)}
    \end{subfigure}%
    ~ 
    \begin{subfigure}[t]{0.5\textwidth}
        \centering
           $\begin{pmatrix}
        * & \alpha\pi_C &  \beta \pi_A& \gamma \pi_G \\
        \alpha\pi_T & * &  \delta \pi_A& \epsilon \pi_G \\
        \beta \pi_T & \delta\pi_C &  * & \eta \pi_G \\
        \gamma\pi_T & * &  \eta \pi_A& * \\
        \end{pmatrix}$
          \caption{4-state general time-reversible model (GTR)}
    \end{subfigure}
    \caption{The rate matrices for two commonly used models in phylogenetics. The diagonal entries are chosen so that the row sums are equal to zero. In the K3P model, the root distribution is
 assumed to be uniform.}
 \label{fig: rate matrices}
\end{figure*}

Now, given a species tree $(S,\boldsymbol{\tau})$ and a choice
of nucleotide substitution model, let 
$p_{i_1\ldots i_n|(S,\boldsymbol{\tau})}$ be the probability of observing 
the site pattern $i_1\ldots i_n$ at the leaves of $(S,\boldsymbol{\tau})$.
To account for gene trees with topologies different from
that of $S$, we label each leaf of a gene tree by 
the corresponding label of the leaf branch of $S$ 
in which it is contained.
To compute $p_{i_1\ldots i_n|(S,\boldsymbol{\tau})}$, we must
consider the contribution of each gene tree compatible with 
$(S,\boldsymbol{\tau})$. So that we may write the formulas explicitly, 
we first consider the contribution of gene trees matching a particular coalescent history,

$$
p_{i_1\ldots i_n|h,(S,\boldsymbol{\tau})}=
\displaystyle
\int_{\mathbf{t}}
p^*_{i_1\ldots i_n|(G,\mathbf{t})}
f_h((G,\mathbf{t})|(S,\boldsymbol{\tau}))
\ d\mathbf{t}
$$

To compute the total probability, we sum over all possible histories so that

\begin{align*}
p_{i_1\ldots i_n|(S,\boldsymbol{\tau})} &=
\displaystyle
\sum_{h \in \mathcal{H}}
\displaystyle
\int_{\mathbf{t}}
p^*_{i_1\ldots i_n|(G,\mathbf{t})}
f_h((G,\mathbf{t})|(S,\boldsymbol{\tau}))
\ d\mathbf{t} \\
&= 
\displaystyle
\sum_{h \in \mathcal{H}}
p_{i_1\ldots i_n|h,(S,\boldsymbol{\tau})}
\end{align*}

Note that the bounds of integration will depend on the history
being considered.

\subsection{A Modified Coalescent}

In this section, we introduce various ways that we might
alter the multispecies coalescent to  
better reflect the evolutionary process. 
Recall that the
length of the path from the root of the 
species tree to a leaf is the total
number of generations that have occurred between
the species at the root and that at the leaf.
Since the length of a generation
may vary for different species \cite{martinpalumbi1993}, 
it may be desirable to allow 
the lengths of the paths from the root to each leaf to differ.
Therefore, we first consider expanding the allowable set of
species tree parameters to include trees that are not equidistant. 
Such a tree cannot be specified by the speciation times alone, 
so we use the vector
 $\boldsymbol{\tau} = (\tau_1, \ldots, \tau_{2n -2})$ to denote
the vector of branch lengths. 

Fix a nucleotide substitution model. 
Let $\mathcal{C}(S) \subseteq \Delta^{4^n  - 1}$ be the set of probability distributions obtained from the standard multispecies 
coalescent model on the $n$-leaf topological tree $S$. 
Let $\mathcal{C}_n \subseteq \Delta^{4^n  - 1}$ denote the set
of all distributions obtained by allowing the species tree to vary
over all equidistant $n$-leaf trees. 
If the species tree of the model is not assumed to be equidistant,
this removes the assumption of a molecular clock and we 
refer to this model as the \emph{clockless coalescent}.
The set of distributions obtained from a single tree topology 
in the clockless coalescent is $\mathcal{C}^*(S)$ and the 
set of distributions obtained by varying the species tree over
all rooted
$n$-leaf trees is
$\mathcal{C}_n^*$.

We can also account for the fact that the effective
 population size, $N$, may vary for different species \cite{charlesworth2009} 
 by introducing a separate effective population size parameter 
 for each internal branch of the species tree.
 We call this model the \emph{$p$-coalescent} and
 let $\mathcal{C}(S,N)$ be the set of probability 
distributions obtained from the $p$-coalescent on $S$. 
In analogy to our notation from above, we use
$\mathcal{C}_n(N)$ to denote the set of all distributions
obtained from the $p$-coalescent and use
$\mathcal{C}^*(S,N)$ and $\mathcal{C}^*_n(N)$ for
the clockless $p$-coalescent.

Notice that since we assume coalescent events do not
occur within leaves, changing the effective population
size on the leaf edges does not change the gene tree distribution 
or the site pattern probability distribution from the gene trees. 
In this section, we will show that remarkably, the unrooted species tree parameter of the clockless coalescent, $p$-coalescent, and the clockless $p$-coalescent are all generically identifiable.
Conveniently, from the perspective of reconstruction,
we also show that the method of SVDQuartets \cite{chifmankubatko2014} 
can be used to reconstruct the unrooted species tree
based on a sample from the probability distribution given by the model.

It is worth pausing for a moment to compare the mathematical 
and biological meaning of these changes. 
To illustrate, we will consider
how the model on the species tree $S$ depicted in 
Figure \ref{fig: SpeciesTree} changes as we modify the parameters. 
First, consider the effect of increasing the length of the branch $e_{ab}$ in $S$.
This is essentially modeling an increase in the generation
rate of the common ancestor of 
species $a$ and $b$. 
Notice that this changes the gene tree distribution 
and that 
this is as it should be.
 Indeed, thinking in terms of the 
Wright-Fisher process,
an increase in the number of generations along this branch
should correspond to
an increase in the probability of two lineages coalescing
 in this branch. 
 
Similarly, we may consider the $p$-coalescent on the
equidistant tree $S$. There is now a parameter $N_{ab}$
representing the effective population size
of the immediate ancestor of species $a$ and $b$.
Altering $N_{ab}$ also changes the gene tree distribution, since  an increase or decrease of $N_{ab}$ will have the opposite effect on the probability that branches $a$ and $b$ coalescence along $e_{ab}$. 
However, we emphasize that there is no reason to assume that 
we can obtain the same distribution obtained by altering the
length of $e_{ab}$ by adjusting the value of $N_{ab}$ for $S$.
When we alter branch length, both the formulas for time
to coalescence and for the transition probabilities
change, whereas, when we alter the effective population
size on a branch, only the formulas for time 
to coalescence are changed.

Finally, rather than altering branch lengths, one might think to model the increase in the number of generations along
a branch by scaling the rate matrix $Q$ by some factor
 $\rho$ along $e_{ab}$. 
To compute the transition probabilities for a specific branch of a gene tree, we would then use the rate matrices for the branches of the species tree in which it is contained. Thus, multiple transition matrices may have to be multiplied to compute the transition matrix on a single branch of a gene tree. 
For example, if we scale the rate matrix for $e_{ab}$, the 
transition matrix along the leaf edge labeled by $a$ in the
gene tree depicted in Figure \ref{fig: SpeciesTree} 
will be $e^{Q(\tau_{ab})}e^{(\rho Q)t_1} = e^{Q(\tau_{ab}+ \rho t_1)}$.
This idea also allows us to consider 
a multispecies coalescent model with an underlying heterogenous Markov process which we explore
in a forthcoming paper.
 
Since the transition probabilities are given by entries of a matrix
 exponential, and 
 $\exp((\rho Q)\tau) = \exp(Q(\rho \tau))$, 
 it would seem that scaling the rate matrix for a single branch
 or scaling the length of that branch give the same model.
However, while this is true for a nucleotide substitution model, it is not necessarily true for the multispecies coalescent since scaling the rate matrix does not change the gene tree distribution, but 
changing the length of the branch does. 
Still, it may be desirable to scale a rate matrix to model an
actual change in the per generation mutation rate. 
However, there is actually no need to introduce scaled rate matrices, since,
as Example \ref{ex: scaling} demonstrates, 
the clockless $p$-coalescent 
effectively models this situation as well.
In fact, we can obtain the same model by fixing any one of 
the branch length (e.g., set $\boldsymbol{\tau} = \mathbf{1}$), effective population size, or rate matrix scaling factor in the species
tree and introducing distinct parameters on each branch 
for the other two. 
We have chosen to introduce parameters for effective population size and branch length since these have the most natural biological interpretations.


\begin{figure}
\centering
  \includegraphics[width=4cm]{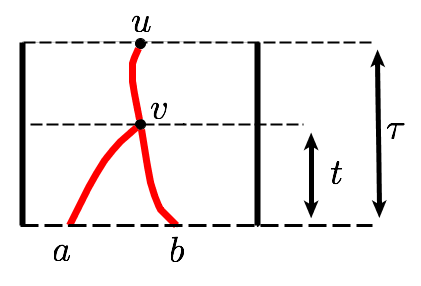}
  \captionof{figure}{Two lineages coalescing in a branch of a species tree.}
  \label{fig: scaling}
\end{figure}

\begin{ex}
\label{ex: scaling}
Let $a$ and $b$ be two lineages entering a 
branch $e$ of a species tree as in Figure \ref{fig: scaling}.
Let $\tau$ be the length of this branch and $N$ be the effective population size parameter. 
The probability that $a$ and $b$ do not coalesce in $e$ is 
\begin{align}
1 - \int_0^\tau  \!
\left(
\dfrac{1}{2N}
\right )
\exp \left ( 
\dfrac{-t}{2N}
\right ) \ dt
= 
\exp \left ( 
\dfrac{-\tau}{2N}
\right ).
\end{align}
\end{ex}

If $a$ and $b$ coalesce, then we can compute the probability of observing the state $xy$ at $a$ and $b$ under a homogenous Markov model model 
where the rate matrix on the branch $e$ is scaled by a factor $\rho$. 
We assume the distribution of states at the vertex $u$ is the 
vector $\boldsymbol{\pi}$. 
Thus, we have,

\begin{align}
 p_{xy} = 
\displaystyle \sum_{z_1,z_2} 
\displaystyle \int_0^{\tau} \!
\left(
\dfrac{1}{2N}
\right )
\exp \left ( 
\dfrac{-t}{2N}
\right )
\pi_{z_1}
\exp(\rho Q(\tau - t))_{z_1,z_2}
\exp(\rho Q(t))_{z_2,x}
\exp(\rho Q(t))_{z_2,y}
\ dt.
\end{align}

Consider now if instead of scaling the rate matrix $Q$ by $\rho$, we scale the length of $e$ and the effective population size
by $\rho$. Then the probability that $a$ and $b$ do not coalesce
remains unchanged since 
$$
\exp \left ( 
\dfrac{-\rho\tau}{2\rho N}
\right )
=
\exp \left ( 
\dfrac{-\tau}{2N}
\right ).$$

Likewise, the probability of observing state $xy$ is given by the following formula, where we make the substitution  $t = \rho T$.

\begin{align*}
 p_{xy} &= 
  \displaystyle \sum_{z_1,z_2} 
\displaystyle \int_0^{\rho \tau} \!
\left(
\dfrac{1}{2\rho N}
\right )
\exp \left ( 
\dfrac{-t}{2\rho N}
\right )
\pi_{z_1}
\exp(Q(\rho \tau - t))_{z_1,z_2}
\exp(Q(t))_{z_2,x}
\exp(Q(t))_{z_2,y}
\ dt \\
%
%
%
%
&= 
 \displaystyle \sum_{z_1,z_2} 
\displaystyle \int_0^{\tau} \!
\left(
\dfrac{1}{2 N}
\right )
\exp \left ( 
\dfrac{- T}{2 N}
\right )
\pi_{z_1}
\exp(\rho Q(\tau -  T))_{z_1,z_2}
\exp(\rho Q(T))_{z_2,x}
\exp(\rho Q(T))_{z_2,y}
\ dT.
\end{align*}

Since the last line is equal to (2), 
the distribution at the leaves remains unchanged and 
the probability of coalescence also remains unchanged.
Generalizing this example, we can obtain the distribution from a species tree with scaled rate matrices by adjusting population size and branch lengths across the tree.

\section{Identifiability of the Modified Coalescent}
\label{sec: Identifiability}

One of the most fundamental concepts in model-based reconstruction is that of identifiability. A model parameter
is identifiable if any probability distribution arising
from the model uniquely 
determines the value of that parameter. 
For the purposes of phylogenetic reconstruction, it is
particularly important that the tree parameter of the model 
be identifiable in order to make consistent inference.
 
In the following paragraphs, we will use the notation 
$\mathcal{C}_{n}$ for the 
set of distributions obtained by varying the $n$-leaf
 tree parameter in the standard multispecies coalescent 
model, though the discussion 
applies equally to 
$\mathcal{C}_n^*$, 
$\mathcal{C}_n(N)$, and
$\mathcal{C}_n^*(N)$. 
To uniquely recover the 
unrooted tree parameter of the $n$-leaf multispecies
coalescent model 
we would require that for all $n$-leaf 
rooted trees $S_1$ and $S_2$ that are 
topologically distinct when the root vertex
of each is suppressed,
$\mathcal{C}({S_1}) \cap \mathcal{C}(S_2) = \emptyset$.
This notion of identifiability is unobtainable in most
instances and much stronger than is required in practice. 
Instead we often wish to establish
\emph{generic identifiability}. 
A model parameter is generically identifiable if the 
set of parameters from which the original parameter
cannot be recovered is a set of Lebesgue measure zero
in the parameter space. 
In our case, although we cannot
guarantee that 
$\mathcal{C}(S_1) \cap \mathcal{C}(S_2) = \emptyset$, 
we will show that that if we select parameters for either model
the resulting distribution will lie in 
$\mathcal{C}(S_1) \cap \mathcal{C}(S_2)$ 
with probability zero.

In \cite{chifmankubatko2015}, it was shown that the unrooted topology of the tree parameter for the coalescent model is generically 
identifiable using the machinery of analytic varieties.
Let
$$\psi_{S}: \Theta_{S} \mapsto \Delta^{\kappa^4 -1}$$
be the map from the continuous parameter space for $S$
to the probability simplex
with $\text{Im}(\psi_{S}) = \mathcal{C}_{S}$. 
Label the states of the model by the natural numbers 
$\{1, \ldots, \kappa\}$.
Given any two species tree parameters $S_1$ and $S_2$, the strategy
is to find a polynomial
$$f \in \mathbb{R}[q_{i_1\ldots i_n}: 1 \leq i_1, \ldots, i_n \leq \kappa]$$
such that for all 
$p_1 \in \mathcal{C}(S_1)$,
$f(p_1) = 0$, but for which there exists
$p_2 \in \mathcal{C}(S_2)$ such that
$f(p_2) \not = 0$. 
If we can show that the function
$$f \circ \psi_{S_2}: \Theta_{S_2} \to \mathbb{R}$$
is analytic, the theory of real analytic varieties 
and the existence of $p_2$ then imply that 
the zero set of $f \circ \psi_{S_2}$ 
is a set of Lebesgue measure zero 
\cite{Mityagin2015}.
Doing this for all pairs of $n$-leaf trees establishes the 
generic identifiability of the unrooted tree parameter of
$\mathcal{C}_n$.

We will show that the modified models introduced above are still
identifiable using the same approach. 
In the discussion proceeding \cite[Corollary 5.5]{chifmankubatko2015}, 
it was shown that for the 
standard multispecies coalescent, 
to establish identifiability of the tree parameter of 
the coalescent model 
for trees with any number of leaves it is enough to prove
the identifiability of the tree parameter for the
4-leaf model. Essentially the same proof of that theorem applies to the clockless coalescent giving us the following proposition.

\begin{op}
\label{4leavesisenough}
If the topology of the unrooted species tree parameter of $\mathcal{C}^*_4$ 
is generically 
identifiable then the topology of the unrooted species tree parameter of $\mathcal{C}^*_n$
 is generically identifiable for all $n$.
\end{op}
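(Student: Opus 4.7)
The plan is to mirror the argument used for the standard multispecies coalescent in \cite{chifmankubatko2015}. That argument rests on three facts: (i) an unrooted binary phylogenetic tree is determined by its set of induced quartets \cite{Semple2003}; (ii) for any $n$-leaf species tree, marginalizing the site pattern distribution over any $n-4$ leaves produces the site pattern distribution of the induced 4-leaf subtree in the 4-leaf version of the same model; and (iii) the parameterization map $\psi_S$ is real-analytic, so a polynomial identity that vanishes on $\mathcal{C}^*(S_1)$ but not on $\mathcal{C}^*(S_2)$ pulls back to a real-analytic function on $\Theta_{S_2}$ whose zero set has Lebesgue measure zero \cite{Mityagin2015}. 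The only adaptation required to move from $\mathcal{C}_n$ to $\mathcal{C}^*_n$ is that, after dropping the equidistant condition, induced subtrees of clockless trees are again clockless rather than equidistant, so the image of the marginalization lands in $\mathcal{C}^*_4$ rather than $\mathcal{C}_4$.

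The technical heart of the proof is (ii), the marginalization lemma. I would show that for every $n$-leaf clockless species tree $(S, \boldsymbol{\tau})$ and every 4-element leaf subset $L' = \{a,b,c,d\}$, the marginal
\[
q_{i_a i_b i_c i_d} \;=\; \sum_{i_j:\, j \notin L'} p_{i_1 \ldots i_n\,|\,(S, \boldsymbol{\tau})}
\]
coincides with the 4-leaf clockless coalescent probability $\psi_{(S|_{L'},\, \boldsymbol{\tau}')}(i_a i_b i_c i_d)$, where $S|_{L'}$ is obtained by suppressing leaves outside $L'$ and $\boldsymbol{\tau}'$ is the vector of branch lengths obtained by summing the $S$-branch lengths along any path collapsed through a resulting degree-two vertex. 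Two ingredients go into this: the consistency of Kingman's coalescent under subsampling of lineages (the marginal genealogy of any four of the $n$ lineages is itself the clockless coalescent on the induced 4-leaf subtree), and the elementary observation that marginalizing the substitution Markov process at a leaf, and then over any degree-two internal vertex, corresponds to multiplying adjacent transition matrices $e^{Q\tau_1}e^{Q\tau_2} = e^{Q(\tau_1+\tau_2)}$, which is exactly the path-sum used to form $\boldsymbol{\tau}'$.

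With the marginalization lemma in hand, suppose $S_1, S_2$ are $n$-leaf rooted topologies whose unrooted versions differ. Then by (i) there is a 4-element subset $L'$ on which the induced quartets $Q_1, Q_2$ differ, and by the hypothesis of the proposition there is a polynomial $f \in \mathbb{R}[q_{i_a i_b i_c i_d}]$ that vanishes on all of $\mathcal{C}^*(Q_1)$ but is not identically zero on $\mathcal{C}^*(Q_2)$. Substituting each $q_{i_a i_b i_c i_d}$ by the corresponding marginal sum produces a polynomial $\tilde f$ in the $n$-leaf indeterminates that vanishes on $\mathcal{C}^*(S_1)$ by Step~2; it is not identically zero on $\mathcal{C}^*(S_2)$ because the induced-subtree map of parameter spaces $\Theta_{S_2} \to \Theta_{S_2|_{L'}}$ is surjective (the remaining branch lengths may be chosen freely). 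Since $\psi_{S_2}$ is real-analytic in its branch length parameters, $\tilde f \circ \psi_{S_2}$ is a non-identically-zero real-analytic function on the connected parameter space $\Theta_{S_2}$, hence has zero set of Lebesgue measure zero \cite{Mityagin2015}; taking the finite union over all unrooted-distinct pairs of $n$-leaf topologies yields generic identifiability. I expect the main obstacle to be verifying the marginalization lemma carefully in the clockless setting: the bounds of integration in the gene tree density for $(S, \boldsymbol{\tau})$ depend on the full interleaving of speciation times along the root-to-leaf paths, so one must check that after integrating out the extra lineages the remaining integral is precisely the clockless coalescent density on $(S|_{L'}, \boldsymbol{\tau}')$ — and the consistency property of Kingman's coalescent is exactly what makes this work.
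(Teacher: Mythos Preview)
Your proposal is correct and takes essentially the same approach as the paper. The paper does not give an independent proof of this proposition; it simply observes that the argument preceding \cite[Corollary~5.5]{chifmankubatko2015} for the standard coalescent carries over verbatim to the clockless coalescent, the only change being that the induced 4-leaf subtree of a non-equidistant species tree is again non-equidistant, so the marginalization lands in $\mathcal{C}^*_4$ rather than $\mathcal{C}_4$---exactly the adaptation you identified. Your sketch (quartet determination of unrooted trees, the marginalization lemma via consistency of Kingman's coalescent and the semigroup property $e^{Q\tau_1}e^{Q\tau_2}=e^{Q(\tau_1+\tau_2)}$, surjectivity of the induced parameter map, and the analytic-variety argument) is precisely that argument spelled out in detail.
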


A similar proposition holds for the $p$-coalescent but a slight
modification is required.
The subtlety is illustrated in 
Figure \ref{fig: RestrictedSpeciesTrees} where a species
tree and its restriction to a 4-leaf subset of the leaves 
are shown. 
Notice that on the restricted tree, the effective population
size may now vary within a single branch. Therefore, 
to show the identifiability of the unrooted species
tree parameter of the $p$-coalescent for $n$-leaf trees, 
we must show the identifiability of the unrooted species tree parameter of a model on 4-leaf trees
that allows for a finite number of bands on each branch
with separate effective population sizes. 
We will revisit this point after the
proof of Theorem \ref{thm: identifiability16}, though it turns out to be 
rather inconsequential.

\begin{figure}[h]
\caption{A 5-leaf species tree $S$ with multiple effective population size parameters and its restriction to the
4-leaf subtree $S_{|\{a,b,d,e\}}$.
The image of the marginalization map applied to the model for 
$S$ will be the model for $S_{|\{a,b,d,e\}}$ 
with different effective population size parameters on different portions of $e_{ab}$.
}
\label{fig: RestrictedSpeciesTrees}
\begin{center}
\includegraphics[width=12cm]{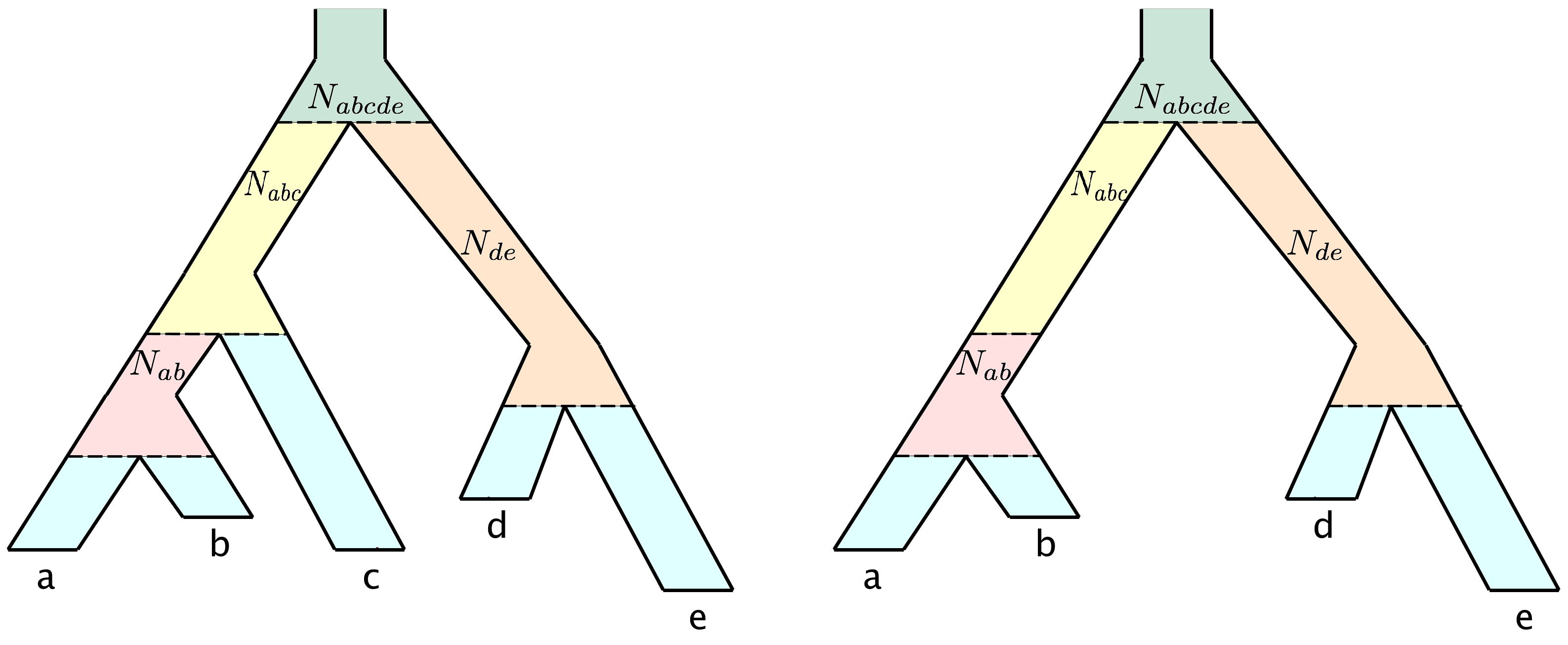}
\end{center}
\end{figure}

\subsection{The analyticity of $\psi_S$}
\label{sec: analyticity}

As mentioned above, to apply the results for 
analytic varieties, we need each of the
parameterization functions $\psi_S$ to be a real analytic
function in the continuous parameters 
of the model. 
That this is so may seem obvious to some
and was stated without proof in \cite{chifmankubatko2015}.
However, the coordinates of $\psi_S$ are functions
of the parameters defined by improper integrals.
We note that it is not true in general that if 
$f(\alpha,t)$ is a real analytic function that 
$F(\alpha) = \int_0^\infty f(\alpha,t) \ dt$ 
is real analytic -- even if $F(\alpha)$ is defined for all $\alpha$
(e.g., if $f(\alpha,t) = \frac{d}{dt}(\alpha \tanh(\alpha t))$, then $F(\alpha) = |\alpha|$). 
Further complicating matters is the fact
that the entries of the transition matrices are defined by
a matrix exponential. Thus, while they can be expressed
as convergent power series in the rate matrix parameters,
it is difficult to work with improper integral over these series.
For the models JC69, K2P, K3P, F81, HKY85, and TN93 these issues become 
irrelevant, as we can 
diagonalize the rate matrices and obtain a closed form
expression for the entries of the transition matrices. 
The entries are then seen to be exponential functions of
branch length and we can solve the improper integrals from 
the multispecies coalescent and obtain
exact formulas for each coordinate of $\psi_S$ that are
clearly analytic. Thus, we have the following proposition.

\begin{op}
\label{prop: analytic}
Let $S$ be a rooted $4$-leaf species tree. 
The parameterization map $\psi_S$ is analytic when the underlying nucleotide substitution model is any of
JC69, K2P, K3P, F81, HKY85, or TN93.
\end{op}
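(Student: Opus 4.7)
The plan is to write each coordinate of $\psi_S$ as the finite sum
$$p_{i_1\ldots i_n|(S,\boldsymbol{\tau})} \;=\; \sum_{h\in\mathcal{H}} \int_{\mathbf{t}} p^*_{i_1\ldots i_n|(G,\mathbf{t})}\, f_h((G,\mathbf{t})|(S,\boldsymbol{\tau}))\, d\mathbf{t},$$
and to show that each summand admits a closed-form evaluation that is manifestly analytic in the continuous parameters of the model. Since $\mathcal{H}$ is finite, analyticity of each summand implies analyticity of $\psi_S$.

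For each of the six listed substitution models, the rate matrix $Q$ admits an explicit spectral decomposition $Q = UDU^{-1}$ in which the eigenvalues $\lambda_1,\ldots,\lambda_4$ and the entries of $U$ and $U^{-1}$ are rational functions of the rate and frequency parameters $\theta$ (this is classical for JC69, K2P, K3P, and standard for F81, HKY85, TN93, each of which decomposes under a suitable block structure). Consequently, the entries of $P(t) = e^{Qt}$ take the closed form
$$P_{ij}(t) \;=\; \sum_{k=1}^{4} c_{ijk}(\theta)\, e^{\lambda_k(\theta) t},$$
with each $c_{ijk}$ rational in $\theta$, with $\lambda_1 \equiv 0$, and with $\mathrm{Re}(\lambda_k) < 0$ for $k > 1$ throughout the interior of the parameter space. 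Substituting these expressions into the Markov computation of $p^*_{i_1\ldots i_n|(G,\mathbf{t})}$ and multiplying by the gene-tree density $f_h$, itself a product of exponentials in the $t_j$, branch lengths $\tau_e$, and population sizes $N_e$, the integrand decomposes as a finite sum of terms of the form
$$A(\theta,\boldsymbol{\pi},\boldsymbol{\tau},N)\,\exp\!\Bigl(\,\sum_{j=1}^{n-1}\alpha_j(\theta,\boldsymbol{\tau},N)\, t_j\Bigr),$$
where $A$ and each $\alpha_j$ are analytic (in fact rational/polynomial) in the parameters.

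The main obstacle is the last step: the paper itself cautions that improper integrals can destroy analyticity, so we must check that the iterated integration produces an analytic result. Here the iterated integrals are of two types. The inner integrals run over bounded intervals $[0,T(\boldsymbol{\tau})]$ where $T$ is an affine function of the speciation times; each is evaluated using
$$\int_0^T e^{\alpha t}\,dt \;=\; \frac{e^{\alpha T} - 1}{\alpha},$$
whose right-hand side, extended by $T$ at $\alpha = 0$, is entire in $\alpha$ and hence analytic in the parameters. The single improper integral, namely the outermost one over the coalescent time $t_{n-1}$ in the root branch, is handled by $\int_0^\infty e^{\alpha t}\,dt = -1/\alpha$; the exponent $\alpha_{n-1}$ is a strictly negative linear combination of nonzero $\lambda_k(\theta)$ and of the coalescent rate $j(j-1)/(4N)$, so $\alpha_{n-1}$ is bounded away from zero on the interior of the parameter space and the result is analytic there.

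Performing these $n-1$ integrations term by term and summing over the finitely many exponential components and the finitely many histories, one obtains for $p_{i_1\ldots i_n|(S,\boldsymbol{\tau})}$ a finite combination of exponentials and rational expressions in the parameters, whose denominators are nonzero linear combinations of eigenvalues of $Q$ and coalescent rates. This closed form is analytic on the entire open parameter space, which establishes the proposition. The same argument fails for GTR only because its eigenvalues are not rational in the rate parameters; for the six listed submodels the explicit spectral decomposition is exactly what makes the closed-form integration — and thus the analyticity of $\psi_S$ — go through.
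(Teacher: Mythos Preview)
Your proof is correct and follows essentially the same approach as the paper: explicitly diagonalize the rate matrix for each of the six models, express the transition probabilities in the closed form $P_{ij}(t)=\sum_k c_{ijk}(\theta)e^{\lambda_k(\theta)t}$, carry out the coalescent integrals term-by-term to obtain a closed-form expression, and observe that the result is analytic. The paper states this argument in only a few sentences preceding the proposition; you have supplied the details it omits, in particular the treatment of the bounded integrals via the entire function $(e^{\alpha T}-1)/\alpha$ and the verification that the exponent in the improper integral stays strictly negative because of the coalescent rate contribution.
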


The rate matrix for the 4-state general time-reversible model is similar to a real symmetric matrix and is thus also diagonalizable.
However, actually writing down a closed form for the entries
of the transition matrix is not possible due to the large number
of computations involved. Instead, we will explain the steps
necessary to diagonalize the GTR rate matrix. 
While we will not be able to write the entries explicitly, we will
see that they must be expressions involving only elementary operations of the rate matrix parameters, roots of the rate matrix parameters, and exponential functions. This enables us to 
prove the following proposition about $\psi_S$.

\begin{op}
\label{prop: GTRanalytic}
Let $S$ be a rooted $4$-leaf species tree. 
Let $\psi_S$ be the parameterization map for the multispecies
coalescent model
when the underlying nucleotide substitution model is
the 4-state GTR model.
For a generic choice of continuous parameters $\boldsymbol{\theta}$,
there exists a neighborhood around $\boldsymbol{\theta}$ on which
$\psi_S$ is a real analytic function.
 \end{op}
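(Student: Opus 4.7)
My plan is to reduce the statement to a finite-dimensional analytic manipulation of exponentials by diagonalizing the GTR rate matrix at a generic parameter, then observing that the remaining coalescent integrals can be evaluated in closed form.

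First, I would set up the diagonalization. Writing $D = \mathrm{diag}(\boldsymbol{\pi})$, the GTR time-reversibility relations $\pi_i Q_{ij} = \pi_j Q_{ji}$ imply that $D^{1/2} Q D^{-1/2}$ is a real symmetric matrix, so $Q$ is diagonalizable with real spectrum for every admissible value of the continuous parameters $\boldsymbol{\theta} = (\alpha,\beta,\gamma,\delta,\epsilon,\eta,\pi_T,\pi_C,\pi_A,\pi_G)$. The characteristic polynomial $\chi_Q(\lambda;\boldsymbol{\theta})$ is a polynomial in the entries of $Q$, hence in $\boldsymbol{\theta}$. Its discriminant $\Delta(\boldsymbol{\theta})$ is therefore a polynomial, and the set $U = \{\boldsymbol{\theta} : \Delta(\boldsymbol{\theta}) \neq 0\}$ of parameters at which $Q$ has four distinct eigenvalues is Zariski-open; one can exhibit a single $\boldsymbol{\theta}$ where $Q$ has distinct eigenvalues (e.g., a JC-like specialization perturbed slightly), so $U$ is a full-measure open subset of the parameter space.

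Second, I would invoke analytic perturbation theory on $U$. Since $\chi_Q$ has simple roots on $U$, the implicit function theorem applied to $\chi_Q(\lambda;\boldsymbol{\theta}) = 0$ shows that each eigenvalue $\lambda_i(\boldsymbol{\theta})$ is a real analytic function on $U$. The corresponding eigenvectors, obtained by selecting an analytic basis of $\ker(Q - \lambda_i I)$ via Cramer's rule, are also real analytic on $U$. Consequently the spectral decomposition yields
\[
\bigl(e^{Qt}\bigr)_{jk} = \sum_{i=1}^{4} c^{(i)}_{jk}(\boldsymbol{\theta})\, e^{\lambda_i(\boldsymbol{\theta}) t},
\]
where each $c^{(i)}_{jk}$ is real analytic on $U$. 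The same formula holds for $e^{(\rho Q)t}$ with $\lambda_i$ replaced by $\rho \lambda_i$; in particular the dependence on branch length is purely exponential.

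Third, I would plug this into the site-pattern integrals. For a fixed coalescent history $h$, the integrand $p^*_{i_1\cdots i_n|(G,\mathbf{t})} f_h((G,\mathbf{t})|(S,\boldsymbol{\tau}))$ is a finite sum (over assignments of states to internal nodes) of products of entries of transition matrices along the edges of $(G,\mathbf{t})$ and Wright-Fisher density factors $(2N)^{-1}\exp(-j(j-1)t/(2\cdot 2N))$. By the previous step, each such product is a finite linear combination of terms of the form
\[
A(\boldsymbol{\theta},\boldsymbol{\tau})\, \exp\!\bigl(B_1(\boldsymbol{\theta})\, t_1 + \cdots + B_{n-1}(\boldsymbol{\theta})\, t_{n-1}\bigr),
\]
with $A$ and each $B_j$ real analytic on $U$ (and linear in the $\tau_\ell$'s through exponentials of the form $e^{\lambda_i \tau_\ell}$). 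Integration with respect to $t_1,\ldots,t_{n-1}$ over the nested intervals from Example~\ref{ex: SpeciesTree} is then elementary: each one-variable integral of an exponential is an elementary function of the coefficients $B_j$ and the upper limits. The result is a finite expression in the $\lambda_i(\boldsymbol{\theta})$, the $c^{(i)}_{jk}(\boldsymbol{\theta})$, the $\pi_j$'s, and exponentials of linear combinations of the $\tau_\ell$'s, divided by polynomials in the $B_j$'s. Summing over coalescent histories $h \in \mathcal{H}$ and over internal-node states gives each coordinate of $\psi_S(\boldsymbol{\theta},\boldsymbol{\tau})$.

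The main obstacle, and the only delicate point, is ensuring that the elementary integrations introduce no singularities at a generic parameter. After integrating an expression of the form $\int_0^{T} e^{Bt}\,dt = (e^{BT}-1)/B$, the resulting function has a removable singularity when $B=0$ and genuine analyticity when $B \neq 0$. Let $V \subseteq U$ be the subset on which none of the finitely many linear combinations $B_j(\boldsymbol{\theta})$ arising across all histories vanishes; $V$ is again the complement of a real-analytic hypersurface in $U$ and hence open and dense, with Lebesgue-measure-zero complement. On $V$ every intermediate expression is a quotient of analytic functions with non-vanishing denominator, so $\psi_S$ is real analytic on $V$. Since $V$ is open and contains a generic $\boldsymbol{\theta}$, this gives the desired neighborhood, proving the proposition.
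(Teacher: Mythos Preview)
Your overall strategy---diagonalize the GTR rate matrix, express transition probabilities as finite sums of exponentials $c^{(i)}_{jk}(\boldsymbol{\theta})e^{\lambda_i t}$, then evaluate the coalescent integrals in closed form---is the same as the paper's. The technical execution differs in two places. First, you obtain the analytic dependence of eigenvalues and eigenvectors via the implicit function theorem and Cramer's rule on the discriminant locus $U$, whereas the paper exploits that one eigenvalue is $0$ and applies the cubic formula to the remaining degree-three factor, together with a Cayley--Hamilton construction of eigenvectors. Your route is cleaner and generalizes to any $\kappa$, but the paper's explicit radical expressions produce a concrete globally defined formula $\phi_S$ (defined off an explicit zero set) that is later used in Theorem~\ref{thm: identifiability16} to clear denominators and build an everywhere-analytic function $D_{ac|bd}(\phi_S)$; the purely local IFT argument would not furnish that object.

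There is one genuine gap. Your treatment of the integrations handles only the bounded inner integrals, writing $\int_0^T e^{Bt}\,dt=(e^{BT}-1)/B$ and noting a removable singularity at $B=0$. But the outermost coalescent integral runs over $[0,\infty)$, and for that improper integral you need $B<0$ strictly; otherwise the integral diverges and there is no closed form to be analytic. The paper supplies exactly this missing step: it observes that every eigenvalue of $Q$ satisfies $\lambda_i\le 0$, so the exponent attached to the unbounded integration variable is a sum of nonpositive eigenvalue contributions plus the strictly negative coalescent rate $-j(j-1)/(4N)$, hence strictly negative. You should add this observation; once you do, your set $V$ is unnecessary for the improper integral (the relevant $B$ never vanishes) and only needed for the bounded ones, where in fact the singularity is removable anyway.
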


\begin{proof}
Let $Q$ be a rate matrix for the 4-state GTR model. Then 
$$A = \text{diag}(\boldsymbol{\pi}^{1/2})Q\text{diag}(\boldsymbol{\pi}^{-1/2})$$
 is a real symmetric matrix and hence there exists
 a real orthogonal matrix $U$ and a diagonal matrix $D$ 
 such that $A = UDU^T.$
 As similar matrices, $Q$ and $A$ have the same eigenvalues
 and so  $\lambda_1 = 0$ is an eigenvalue of $A$. 
 Factoring the degree four characteristic equation of $A$, we can  use the cubic formula to write the 
other eigenvalues $\lambda_2, \lambda_3,$ and $\lambda_4$ 
explicitly in terms of the rate matrix parameters.
All eigenvalues of $A$ will be real and less than or equal to zero 
\cite[Section 2.2]{allmanetal2008}.

Suppose now that all eigenvalues of $A$ are distinct.
By the Cayley-Hamilton theorem, we have
$$A(A - \lambda_2I)(A - \lambda_3I)(A - \lambda_4I) = 0.$$ 
Since all eigenvalues of $A$ are distinct, the minimal and characteristic polynomials of $A$ are equal. Thus,
there exists a non-zero
column of $\prod_{i \not = j}(A - \lambda_jI)$ that
is an eigenvector of $A$ with eigenvalue $\lambda_i$. 
Generically, the first column of each of these matrices
will be non-zero and
so we can use the normalized first columns to construct the
matrix $U$ so that 
$Q = (\text{diag}(\boldsymbol{\pi}^{-1/2})U)
\text{diag}(0, \lambda_2, \lambda_3, \lambda_4)
(U^{T}\text{diag}(\boldsymbol{\pi}^{1/2}))$.

Therefore, 
$$P(t) = e^{Qt} = 
(\text{diag}(\boldsymbol{\pi}^{-1/2})U)
\text{diag}(1, e^{\lambda_2t}, e^{\lambda_3t},
e^{\lambda_4t})
(U^{T}\text{diag}(\boldsymbol{\pi}^{1/2})),$$
and each entry of the matrix exponential can be written as 
\begin{align}
P_{ij}(t) = \displaystyle \sum_{1 \leq k \leq 4}  f^{(ij)}_k(q)e^{\lambda_kt},
\end{align}
where the $f^{(ij)}_k(q)$ are rational functions of the rate matrix parameters and roots of the rate
matrix parameters coming from the cubic formula.
Thus, for a generic choice of parameters, the functions $p^*_{i_1\ldots i_n|(G,\mathbf{t})}$ are all
sums of products of these functions which are exponential  
in the branch length $t$. The formulas coming from the coalsecent
process, $f_h((G,\mathbf{t})|(S,\boldsymbol{\tau}))$, are also exponential functions in $t$. 
Because each $\lambda_i$ is guaranteed to be less than or equal to zero, when we integrate with respect to branch length each of the integrals converges. Therefore, $\psi_S$ can be written in closed form as an expression involving rational functions of the model parameters, roots of the model parameters, and exponential functions of both of these.

The expression for $\psi_S$ given above only fails to be analytic at points where one of the vectors used to construct $U$ is actually the zero vector so that the formula for normalizing the columns involves dividing by zero. This includes points where the eigenvalues of the rate matrix are not distinct.
Since each of the rate matrices
is diagonalizable, if all of the eigenvalues are not distinct then 
at least one of the vectors we have constructed will certainly be zero \cite[\S 6.4, Theorem 6]{Hoffman1971}. 
Therefore, the formula given above, which we will denote 
$\phi_S$, agrees
with $\psi_S$ everywhere that it is defined. 
Our choice of rate matrix parameters from the K3P model in Theorem \ref{thm: identifiability16} demonstrates that the analytic functions for the entries of the columns of $U$ are not identically zero. 
Therefore, for a generic choice of parameters 
$\boldsymbol{\theta}$, $\phi_S$ agrees with $\psi_S$ on a 
neighborhood around $\boldsymbol{\theta}$ and is a real analytic function in that neighborhood.
\end{proof}

We can extend the result in Proposition \ref{prop: analytic}
to include the generalized $\kappa$-state Jukes-Cantor model
since the entries of the transition matrices are given by

\[ P_{ij}(t) = 
\begin{cases} 
      \frac{1}{\kappa} + \frac{\kappa - 1}{\kappa}e^{-\mu t} & 
      i = j, \\
       \frac{1}{\kappa} - \frac{1}{\kappa}e^{-\mu t} & 
      i \not = j. \\
   \end{cases}
   \]

For the $\kappa$-state GTR model, because zero is an eigenvalue of $Q$, finding
all the eigenvalues of $Q$ amounts to finding the roots of a 
degree $(\kappa - 1)$ polynomial with coefficients that 
are polynomials of the rate matrix parameters. 
Thus, the same arguments also extend to any $\kappa$-state time-reversible model with $\kappa \leq 5$. 
Since there is no general algebraic solution for 
finding the roots of polynomials of degree five or higher (Abel-Ruffini Theorem), 
for $\kappa > 5$, a different argument is required. 
We do not pursue these arguments here as the $4$-state model
is by far the most commonly used in phylogenetics.

%
%
%
%
%
%
%
%

\subsection{Identifiability of the modified multispecies
coalescent for 4-leaf trees}

We may encode the probability distribution associated to a $\kappa$-state phylogenetic model on an $n$-leaf tree as an $n$-dimensional $\kappa \times \ldots \times \kappa$ tensor $P$
 where the entry $P_{i_1 \ldots i_n}$ is the probability of
observing the state $i_1 \ldots i_n$. 
In \cite[Section 4]{chifmankubatko2015}, the authors explain how to construct tensor flattenings according to a split of the species tree.
Our first result is the analogue of \cite[Theorem 5.1]{chifmankubatko2015}
for the modified coalescent models. We use the notation
$P_{(S,\boldsymbol{\tau},\boldsymbol{\theta})}$ to denote the probability tensor
that results from choosing a species tree $S$ with vector
of edge lengths $\boldsymbol{\tau}$ and continuous 
parameters $\boldsymbol{\theta}$.

\begin{thm} 
\label{thm: identifiability10}
Let $S$ be a four-taxon symmetric $((a,b),(c,d))$ or 
asymmetric $(a,(b,(c,d))$ species tree with a cherry $(c,d)$. 
Let $L_1|L_2$ be one of the splits $ab|cd$, $ac|bd$, or $ad|bc$.
Consider the clockless coalescent 
when the underlying nucleotide substitution model is any of the
following: JC69, K2P, K3P, F81, HKY85, TN93, or GTR.
If $L_1|L_2$ is a valid split for $S$, then for all 
$P_{(S,\boldsymbol{\tau},\boldsymbol{\theta})} \in  
\mathcal{C}^*(S)$,
$$rank(Flat_{L_1|L_2}(P_{(S,\boldsymbol{\tau},\boldsymbol{\theta})})) \leq 10.$$
\end{thm}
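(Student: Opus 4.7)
The plan is to adapt the approach of \cite[Theorem 5.1]{chifmankubatko2015}, which establishes the analogous rank bound for the standard equidistant multispecies coalescent. The key algebraic ingredient driving that proof is time-reversibility of the underlying Markov model, a property that is preserved exactly in the clockless setting; relaxing equidistance only enlarges the domain of integration (the admissible speciation times) without altering the algebraic structure of the substitution process.

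I would begin by decomposing the flattening as a sum over coalescent histories:
\[
\mathrm{Flat}_{L_1|L_2}(P_{(S,\boldsymbol{\tau},\boldsymbol{\theta})})
= \sum_{h\in\mathcal{H}} \int_{\mathbf{t}} f_h\bigl((G,\mathbf{t})\mid(S,\boldsymbol{\tau})\bigr)\,
\mathrm{Flat}_{L_1|L_2}\bigl(P^*_{(G_h,\mathbf{t})}\bigr)\, d\mathbf{t}.
\]
Since $L_1|L_2$ is a valid split, each history $h$ produces a gene tree $G_h$ that displays it. For each such gene tree I would invoke time-reversibility to re-root the Markov process at a vertex on the edge of $G_h$ realizing the split, yielding a per-history factorization
\[
\mathrm{Flat}_{L_1|L_2}\bigl(P^*_{(G_h,\mathbf{t})}\bigr)
= A_h(\mathbf{t})\, M_h(\mathbf{t})\, B_h(\mathbf{t})^T,
\]
where $M_h$ is a symmetric $\kappa\times\kappa$ matrix (symmetric precisely because $\pi_x P_{xy}(t) = \pi_y P_{yx}(t)$), and $A_h, B_h \in \mathbb{R}^{\kappa^2\times\kappa}$ encode the two subtrees on either side of the split edge.

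The core of the argument is then to combine all per-history factorizations into a single representation of the form $\Phi\,\Sigma\,\Phi^T$ with $\Sigma \in \mathbb{R}^{d\times d}$ symmetric and $d \le \binom{\kappa+1}{2} = 10$. The symmetry of each $M_h$ is preserved under linear combinations and integration, and it can be used to collapse the ordered contributions indexed by state pairs $(x,y)$ and $(y,x)$ into a single basis direction indexed by the multiset $\{x,y\}$; there are $\binom{\kappa+1}{2}$ such multisets when $\kappa=4$, which is where the bound $10$ enters.

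The principal obstacle is the loss of the cherry-symmetry used in the equidistant proof: in \cite{chifmankubatko2015} the two pendant edges of a cherry have equal length, so the per-history flattening is automatically invariant under swapping the cherry labels, which is one of the shortcuts that keeps the dimensional count tight. This invariance is no longer available in the clockless case. I would circumvent this by carrying out the entire argument inside the re-rooted factorization, where the symmetry needed for the rank bound comes solely from the middle matrix $M_h$ — a reversibility artifact that does not depend on any branch-length equalities. The explicit analytic expressions for the transition matrix entries established in Propositions \ref{prop: analytic} and \ref{prop: GTRanalytic} ensure that all integrals are well-defined and that the symmetric structure of $M_h$ propagates through the summation and integration, so that the final $\Phi\,\Sigma\,\Phi^T$ representation persists for every history and every choice of continuous parameters.
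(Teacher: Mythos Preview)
Your argument rests on the claim that ``since $L_1|L_2$ is a valid split, each history $h$ produces a gene tree $G_h$ that displays it.'' This is false, and it is precisely the phenomenon the coalescent is designed to model. For the species tree $((a,b),(c,d))$, histories in which neither cherry pair coalesces below the root, followed by (say) $a$ and $c$ coalescing first above the root, yield gene trees with topology $((a,c),(b,d))$ or $(d,(b,(a,c)))$, which do \emph{not} display $ab|cd$. For such gene trees there is no edge realizing the split, so your re-rooting and $A_h M_h B_h^T$ factorization are unavailable. Since these ``anomalous'' histories contribute with positive weight, the decomposition you wrote down cannot be bounded term-by-term in the way you propose. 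Even for the histories that do display the split, your combination step is underspecified: a symmetric $4\times 4$ middle matrix $M_h$ bounds each such term by rank $4$, but it does not by itself furnish a common $\Phi$ across histories or a mechanism for collapsing to a $10$-dimensional space after summation.

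The paper avoids this difficulty by never working gene-tree-by-gene-tree. Instead it observes that lengthening a pendant edge of the species tree (say $c$) acts on the \emph{entire} site-pattern distribution as a fixed linear map on the index $i_3$: each column of $\mathrm{Flat}_{ab|cd}(P_{(S,\boldsymbol{\xi},\boldsymbol{\theta})})$ is a linear combination of columns of $\mathrm{Flat}_{ab|cd}(P_{(S,\boldsymbol{\tau},\boldsymbol{\theta})})$, so rank cannot increase. One then shrinks both cherry edges to length zero; at that degenerate point the species tree has an honest $c\leftrightarrow d$ automorphism, giving the global symmetry $p_{i_1i_2i_3i_4}=p_{i_1i_2i_4i_3}$ for the full mixture (summed over \emph{all} histories, including the anomalous ones), and hence rank at most $\binom{5}{2}=10$. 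Re-extending the cherry edges to their original lengths then preserves this bound. The key point is that the rank-$10$ symmetry is established once, at the level of the species tree, rather than assembled from per-gene-tree pieces.
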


\begin{proof}
Let $L_1|L_2$ be the split $ab|cd$ that is valid for $S$
 and consider the distribution
$P_{(S,\boldsymbol{\tau},\boldsymbol{\theta})}$. 
 Increase the length
of one of the leaf edges in the cherry of $S$ to create
a new vector of edge lengths $\boldsymbol{\xi}$. 
Without loss of generality,
suppose $c$ is extended so that  
$\xi_c > \tau_c$ as in Figure \ref{fig: ExtendedSpeciesTrees}.

First, we claim that 
$$
rank(Flat_{L_1|L_2}(P_{(S,\boldsymbol{\xi},\boldsymbol{\theta})}))
\leq
rank(Flat_{L_1|L_2}(P_{(S,\boldsymbol{\tau},\boldsymbol{\theta})}))
$$
Notice that since coalescent events do not 
happen in the leaves of the species tree, the 
gene tree histories and the 
formulas for the 
gene tree distributions from $(S,\boldsymbol{\tau})$ and $(S,\boldsymbol{\xi})$
are identical.
The only difference is that the leaf edge labeled
by $c$ in the gene tree $(G,\mathbf{t})$ from $(S,\boldsymbol{\xi})$ is
longer by $\xi_c - \tau_c$  
than the same edge in the gene tree $(G,\mathbf{t})$ from $(S,\boldsymbol{\tau})$.
The probability of observing the state 
$i_1i_2i_3 i_4$ from these two gene trees will not be the same, 
so let us express this probability as 
$p^*_{i_1i_2i_3 i_4|(G,\textbf{t},\boldsymbol{\theta})}$ when the species tree is
 $(S,\boldsymbol{\tau})$ and
$q^*_{i_1i_2i_3 i_4|(G,\textbf{t},\boldsymbol{\theta})}$ when the 
species tree is $(S,\boldsymbol{\xi})$.
Extending the branch of a gene tree is equivalent to 
grafting a new edge onto the leaf edge to create an internal
 vertex of degree two. 
 To compute the probability of observing a particular
state at the leaves of the extended gene tree, 
we sum over all possible states of this vertex. 
For clarity of notation, let us represent the matrix of
transition probabilities along the grafted edge by
$M = e^{Q(\xi_c - \tau_c)}$.
Thus, 

$$
q^*_{i_1i_2i_3 i_4|(G,\textbf{t},\boldsymbol{\theta})}=
\displaystyle 
\sum_{1 \leq j \leq 4}
(M_{ji_3}) p^*_{i_1i_2ji_4|(G,\textbf{t},\boldsymbol{\theta})}.
$$

Therefore, the total probability for a particular history is given by

\begin{align*}
p_{i_1i_2i_3i_4|h,(S,\boldsymbol{\xi},\boldsymbol{\theta})}&=
\displaystyle
\int_{\mathbf{t}}
q^*_{i_1i_2i_3i_4|(G,\textbf{t},\boldsymbol{\theta})} 
f_h((G,\mathbf{t})|(S,\boldsymbol{\xi}))
\ d\mathbf{t} \\
&=
\displaystyle
\int_{\mathbf{t}}
\left (
 \displaystyle 
\sum_{1 \leq j \leq 4}
(M_{ji_3})
p^*_{i_1i_2ji_4|(G,\textbf{t},\boldsymbol{\theta})} 
\right )
f_h((G,\mathbf{t})|(S,\boldsymbol{\tau}))
\ d\mathbf{t} \\
&=
 \displaystyle 
\sum_{1 \leq j \leq 4}
(M_{ji_3})
\displaystyle
\left (
\int_{\mathbf{t}}
 p^*_{i_1i_2ji_4|(G,\textbf{t},\boldsymbol{\theta})} 
f_h((G,\mathbf{t})|(S,\boldsymbol{\tau}))
\ d\mathbf{t}
\right ) \\
&=
 \displaystyle 
\sum_{1 \leq j \leq 4}
\displaystyle
(M_{ji_3})
p_{i_1i_2j i_4|h,(S,\boldsymbol{\tau},\boldsymbol{\theta})} \\
\end{align*}

Summing over all histories, we also obtain

$$ 
p_{i_1i_2i_3i_4|(S,\boldsymbol{\xi},\boldsymbol{\theta})}=
\displaystyle 
\sum_{1 \leq j \leq 4}
\displaystyle
(M_{ji_3})
p_{i_1i_2j i_4|(S,\boldsymbol{\tau},\boldsymbol{\theta})}.$$

Now consider the column of
$Flat_{L_1|L_2}(P_{(S,\boldsymbol{\xi},\boldsymbol{\theta})})$
indexed by $i_3i_4$. 
The formula above shows that this column is 
a linear combination of the columns of
$Flat_{L_1|L_2}(P_{(S,\boldsymbol{\tau},\boldsymbol{\theta})})$
 indexed by
$1i_4,2i_4, 3i_4,$ and $4i_4$.
Therefore, 
$$
rank(Flat_{L_1|L_2}(P_{(S,\boldsymbol{\xi},\boldsymbol{\theta})}))
\leq
rank(Flat_{L_1|L_2}(P_{(S,\boldsymbol{\tau},\boldsymbol{\theta})})).
$$
Given a 4-leaf species tree $(S,\boldsymbol{\tau})$ 
with a $(c,d)$ cherry, consider the
tree $(S,\boldsymbol{\tau}^0)$, where all entries of $\boldsymbol{\tau}^0$ are the same
as for $\boldsymbol{\tau}$ except that $\tau^0_c = \tau^0_d = 0$.
The tree $(S,\boldsymbol{\tau}^0)$ may not be equidistant, but it is still
clear from the symmetry in the cherry that 
for any choice of continuous parameters we will have
$$
p_{i_1i_2i_3i_4|(S,\boldsymbol{\tau}^0,\boldsymbol{\theta})}=
p_{i_1i_2i_4i_3|(S,\boldsymbol{\tau}^0,\boldsymbol{\theta})},
$$ which implies that
$
rank(Flat_{L_1|L_2}(P_{(S,\boldsymbol{\tau}^0,\boldsymbol{\theta})}))
\leq 10.
$
Now we can extend each of the cherry edges  $c$ and $d$ in
$(S,\boldsymbol{\tau}^0)$ to construct $(S,\boldsymbol{\tau})$. Since we have shown that
extending these branches does not change the rank of
the flattening, the result follows.

\begin{thm} 
\label{thm: identifiability16}
Let $S$ be a four-taxon symmetric $((a,b),(c,d))$ or 
asymmetric $(a,(b,(c,d))$ species tree with a cherry $(c,d)$. 
Let $L_1|L_2$ be one of the splits $ab|cd$, $ac|bd$, or $ad|bc$.
Consider the clockless coalescent 
when the underlying nucleotide substitution model is any of the
following: JC69, K2P, K3P, F81, HKY85, TN93, or GTR.
 If $L_1|L_2$ is not a valid split for $S$, then for generic
distributions $P_{(S,\boldsymbol{\tau},\boldsymbol{\theta})} \in  
\mathcal{C}^*(S)$,
$$rank(Flat_{L_1|L_2}(P_{(S,\boldsymbol{\tau},\boldsymbol{\theta})}))  =16.$$
\end{thm}

We verify the statement computationally by finding a specific
choice of parameters for the symmetric tree
where 
$\det(Flat_{ac|bd}(P_{(S,\boldsymbol{\tau},\boldsymbol{\theta})})) \not = 0$ and 
$\det(Flat_{ad|bc}(P_{(S,\boldsymbol{\tau},\boldsymbol{\theta})}))\not = 0$
and doing the same for the 
asymmetric tree. 
In fact, we can address both cases with one tree by
letting $S$ be the symmetric tree and setting 
$\tau_{ab} = 0$. 
There is now only one internal edge and we choose
$\tau_c = \tau_d=\tau_{cd} = 1$, $\tau_{a} = \tau_{b}= 2$ and $N_{cd}=1$. We also need only look at one of the 
flattening matrices, since as observed in \cite[Theorem 5.1]{chifmankubatko2015}, the flattening matrices for the invalid splits
will be the same.

We choose continuous parameters
from the Jukes-Cantor model by setting the off-diagonal
entries of the rate matrix to $0.025$ and letting 
the root distribution be uniform. With these choices, both flattening matrices for the invalid splits are rank 16. Since the Jukes-Cantor
model is contained in JC69, K2P, K3P, F81, HKY85, and TN93, this choice of parameters establishes the result for each of these.

The Jukes-Cantor model is of course also contained in the 4-state GTR model. However, recall from Proposition \ref{prop: GTRanalytic} that we were not able to verify that each coordinate
of $\psi_S$ is a real analytic function on the entire parameter space. 
Instead, in Proposition \ref{prop: GTRanalytic}, we described the function $\phi_S$ that agrees with $\psi_S$ everywhere it is defined, which is to say, almost everywhere.
Therefore, we will take a slightly different tack in proving the result for the GTR model. As before, we only need to look at either invalid split, so denote by $d_{ac|bd}(\phi_S)$ the determinant of the flattening matrix where the formulas for 
$\phi_S$ are used in place of the variables $p_{i_1i_2i_3i_4}$.  
Since they agree almost everywhere, if 
$\det(Flat_{ac|bd}(P))$ is zero on a set of positive measure in the parameter space, the same is true of $d_{ac|bd}(\phi_S)$. 
Recall that when constructing the coordinate functions of
$\phi_S$, we normalize the vectors of $U$, and that the
resulting coordinate functions fail to be defined only when one of
these four vectors is identically zero.
Therefore, it is possible to multiply $d_{ac|bd}(\phi_S)$ by these vector norms to clear denominators and 
obtain the function $D_{ac|bd}(\phi_S)$ that is analytic on the entire parameter space.
Importantly, $D_{ac|bd}(\phi_S)$ is zero anywhere
that $d_{ac|bd}(\phi_S)$ is. 
Therefore, if $\det(Flat_{ac|bd}(P))$ is zero on a set
of positive measure, so is $d_{ac|bd}(\phi_S)$, and hence so is $D_{ac|bd}(\phi_S)$. But since it is analytic, this would imply that $D_{ac|bd}(\phi_S)$ is identically zero.

Therefore, we will establish the result by verifying that 
$D_{ac|bd}(\phi_S)$ is not identically zero. 
Choosing parameters from the K3P model with
$a = 1$, $b = 2$ and $c = 3$ (Figure \ref{fig: rate matrices}),
the rate matrix has 
eigenvalues $0, -6, -8,$ and $-10$ and the matrix of normalized 
eigenvectors from Proposition \ref{prop: GTRanalytic} is 
$$U = \dfrac{1}{2} 
\begin{pmatrix}
        1  & -1  & 1  & 1 \\
       1  &  1   & -1  & -1\\
       1  &  1  &  1  & - 1  \\
       1  &  -1 &  -1  & 1  \\
        \end{pmatrix}.$$
Specifying the other
parameters of the model as above, we find that 
$d_{ac|bd}(\phi_S)$ is not identically zero, and 
since none of the columns of $U$ are zero, neither
is $D_{ac|bd}(\phi_S)$.
\end{proof}

\begin{figure}[h]
\caption{Extending one leaf in a cherry of $S$.}
\label{fig: ExtendedSpeciesTrees}
\begin{center}
\includegraphics[width=6cm]{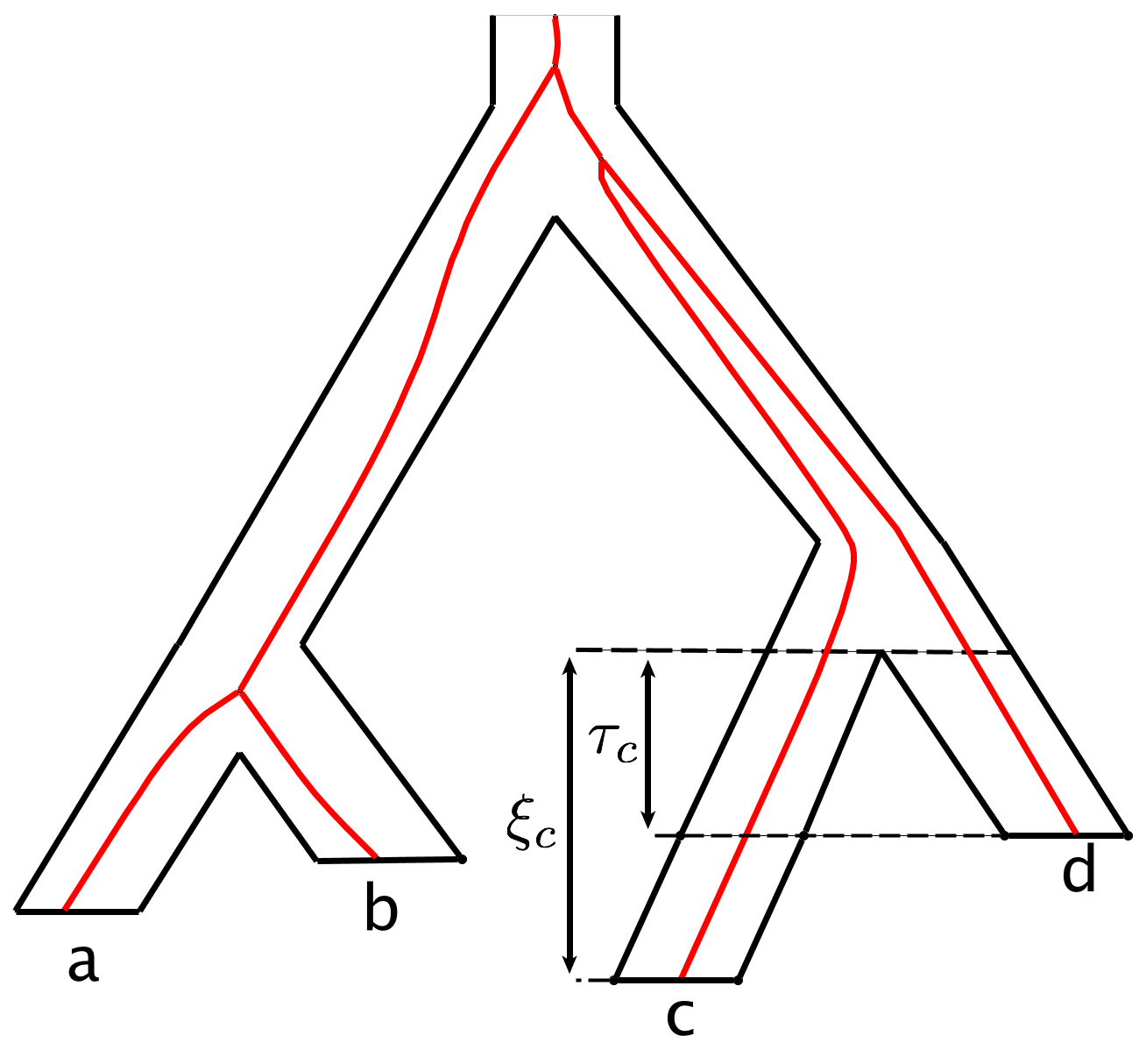}
\end{center}
\end{figure}

A matrix has rank less than 
or equal to $k$ if and only if all of its $(k+1) \times (k+1)$ 
minors vanish.
Therefore, by Theorem \ref{thm: identifiability10},
if $S_1$ is a 4-leaf tree displaying the split $ab|cd$ then
all of the $11 \times 11$ minors of the flattening matrix for 
that split vanish on the set $\mathcal{C}^*(S_1)$.
Theorem \ref{thm: identifiability16} shows
that if $S_2$ is a $4$-leaf tree that does not display the split 
$ab|cd$, then one of these minors will not vanish on the set
$\mathcal{C}^*(S_2)$. Hence, as per the discussion at the 
beginning of Section \ref{sec: Identifiability}, the set of parameters
for $S_2$ mapping into  
$\mathcal{C}^*(S_1) \cap \mathcal{C}^*(S_2)$ is a set 
of measure zero. Thus, the unrooted species tree parameter of the clockless coalescent is generically identifiable.

Since coalescent events do not occur in leaves of the species tree, Theorem \ref{thm: identifiability10} applies equally to the $p$-coalescent and clockless $p$-coalescent models. 
Following Proposition \ref{4leavesisenough}, we observed
that showing the identifiability of the unrooted species tree parameter of the $p$-coalescent requires proving the same result for 4-leaf trees in a model that allows
multiple effective population size parameters on a single edge. Specifically, a model where
the maximum number of effective population size parameters on each edge is bounded by the total number of edges in an $n$-leaf binary tree, $2n -3$, is sufficient to prove
identifiability of the unrooted species tree parameter in $\mathcal{C}^*_n(N)$.
We have essentially already proven this for all $n$.

Both distributions in the 
proof of Theorem \ref{thm: identifiability16} are still contained in the model where we allow
multiple effective population size parameters on each edge
since we can just choose the multiple parameters 
on $e_{cd}$ to all be equal to one. 
We must still verify that the 
parameterization map for this model is analytic, 
but the argument from Section \ref{sec: analyticity}
remains unchanged when we allow multiple
effective population size parameters on each edge.
Thus, the same choices of parameters from the proof of 
Theorem \ref{thm: identifiability16} establish the result for the clockless $p$-coalescent. 
We also intentionally chose a point corresponding to an
equidistant tree so that it applies to the $p$-coalescent.
Thus, we have the following corollary.

\begin{cor}
The unrooted species tree parameters of 
the clockless coalescent, the $p$-coalescent,
and the clockless $p$-coalescent models on an 
$n$-leaf tree 
are generically
identifiable for all $n$ 
when the underlying nucleotide substitution model is any of the
following: JC69, K2P, K3P, F81, HKY85, TN93, or GTR.
\end{cor}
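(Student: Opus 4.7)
The plan is to assemble the ingredients already in hand: generic identifiability at the 4-leaf level, followed by the bootstrap from 4 leaves to $n$ leaves. First I fix any two $n$-leaf rooted trees $S_1, S_2$ whose unrooted topologies differ and, by restricting to a common 4-leaf subset of the label set that witnesses this difference, reduce to showing that the corresponding 4-leaf models $\mathcal{C}^*(S_1')$ and $\mathcal{C}^*(S_2')$ intersect in a set whose preimage in either parameter space has Lebesgue measure zero. For the clockless coalescent this reduction is exactly Proposition \ref{4leavesisenough}, and the 4-leaf identifiability it demands follows from Theorems \ref{thm: identifiability10} and \ref{thm: identifiability16}: select a split $L_1|L_2$ displayed by $S_1'$ but not $S_2'$, and then a specific $11 \times 11$ minor of $Flat_{L_1|L_2}$ vanishes identically on $\mathcal{C}^*(S_1')$ yet not on $\mathcal{C}^*(S_2')$.

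Second, I would pull this nonvanishing minor back through the parameterization $\psi_{S_2'}$. For the models JC69, K2P, K3P, F81, HKY85, and TN93, Proposition \ref{prop: analytic} guarantees that $\psi_{S_2'}$ is real analytic on the whole continuous parameter space, so the pullback is a nonzero real analytic function; by the theorem of \cite{Mityagin2015}, its zero locus has Lebesgue measure zero. Running this argument over every pair of unrooted topologies establishes the corollary for these six substitution models.

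Third, I would handle the $p$-coalescent and clockless $p$-coalescent. Restricting an $n$-leaf $p$-coalescent tree to a 4-leaf subset can introduce up to $2n-3$ population-size bands on a single edge (Figure \ref{fig: RestrictedSpeciesTrees}), so the bootstrap requires the 4-leaf identifiability result in this enriched model. The witness parameters constructed in the proof of Theorem \ref{thm: identifiability16} lie inside the enriched 4-leaf model once we set every additional band parameter equal to one; the rank-at-most-ten argument of Theorem \ref{thm: identifiability10} is insensitive to population-size parameters on internal edges, and the analyticity verification of Section \ref{sec: analyticity} is likewise insensitive to the presence of extra branch-length or population-size parameters. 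Since the chosen witness lies on an equidistant tree, the same point simultaneously certifies identifiability for the $p$-coalescent (without a clockless relaxation) and for the clockless $p$-coalescent.

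The step that requires genuine care, and the likely main obstacle, is the GTR case. Here $\psi_S$ is only locally real analytic, so one cannot directly apply \cite{Mityagin2015} to the pulled-back minor. Following the strategy already used inside the proof of Theorem \ref{thm: identifiability16}, I would instead work with the globally analytic surrogate $D_{ac|bd}(\phi_S)$ obtained by clearing denominators via the column norms from the spectral decomposition built in Proposition \ref{prop: GTRanalytic}. That surrogate is already shown to be not identically zero by the K3P witness, and a short argument confirms that its vanishing locus contains the vanishing locus of $d_{ac|bd}(\phi_S)$, hence of the pulled-back minor, off the discriminant locus where the GTR eigenvalues fail to be distinct, which itself has measure zero. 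Combining these facts produces the required measure-zero conclusion and completes the corollary.
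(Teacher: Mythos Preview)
Your proposal is correct and follows essentially the same route as the paper: the paper's argument for the corollary is exactly the discussion in the paragraphs preceding it, which combines Theorems \ref{thm: identifiability10} and \ref{thm: identifiability16} via the $11\times 11$ minors, invokes analyticity (Propositions \ref{prop: analytic} and \ref{prop: GTRanalytic}) to get measure-zero vanishing loci, bootstraps via Proposition \ref{4leavesisenough}, and handles the $p$-coalescent by observing that the witness parameters lie in the banded model with all extra population sizes set to one and on an equidistant tree. Your treatment of the GTR case via the globally analytic surrogate $D_{ac|bd}(\phi_S)$ is precisely what the paper does inside the proof of Theorem \ref{thm: identifiability16}, so there is no additional work needed at the corollary level.
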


\subsection{Identifiability with invariant sites and gamma distributed rates}

It is well known that the rate of evolution may vary across sites \cite{yang1993,yang1994}.
One way to account for this is to let each site evolve according to the same model but where the rate matrix at each site is scaled
by a random factor drawn from a specified
gamma distribution. 
If the underlying nucleotide substitution model is assumed to be the GTR model, this is what is known 
as the GTR+$\Gamma$ model.

In practice, the gamma distribution is approximated using
$m$ rate categories, each with probability $\frac{1}{m}$.  
Let the density of the gamma distribution
$G(\alpha, \beta)$ be given by $g(\rho;\alpha, \beta)$.
The rate for each category $\rho_i \in (0 ,\infty)$ for 
$1 \leq i \leq m$ is determined by first setting
$a_1 = 0$ and then computing the 
points $a_i$ for $2 \leq i \leq {m}$ such that
$$\int_{a_i}^{a_{i+1}} g(\rho,\alpha, \beta) \ d\rho = \frac{1}{m}.$$
For $1 \leq i \leq m-1$, the rate $\rho_i$ is then given by the mean over each interval,
$$
\rho_i =
m \cdot
 \int_{a_i}^{a_{i+1}} \rho \cdot g(\rho,\alpha, \beta) \ d\rho = 
\left ( \frac{m \alpha}{\beta} \right )
(I(b\beta, \alpha + 1) - 
I(\alpha\beta, \alpha + 1))
$$
where 
$$I(z,\alpha) = 
\left(
\frac{1}{\Gamma(\alpha)}
\right)
\int_0^z \exp(-x) x^{\alpha - 1} \ dx.
$$ 
The last rate $\rho_m$ is then
$\left ( \frac{m \alpha}{\beta} \right) - \sum_{i=1}^{m-1}\rho_i$.
We reproduce these formulas from \cite{yang1994} 
here only to show that the rates can be
expressed as analytic functions in the parameters 
$\alpha$ and $\beta$ and consequently that 
the distributions from the GTR+$\Gamma$ model
are given by real analytic functions of the parameters.


It is also common to account for invariant sites by
using the GTR+I+$\Gamma$ model, where
$\delta$ is the proportion 
of sites that do not evolve at all. 
The multispecies coalescent with the $m$-discrete $\kappa$-state GTR+I+$\Gamma$ model was shown to exhibit the same flattening ranks as the multispecies coalescent with the 
$\kappa$-state GTR model in \cite{chifmankubatko2015}.
This is not terribly surprising as a probability 
distribution from the former is the sum of 
$m+1$ distributions each satisfying the same linear 
relations.
Explicitly, 

$$p^{\Gamma + \delta}_{i_1i_2i_3i_4|(S, \tau, \boldsymbol{\theta})} 
=
\frac{(1 - \delta)}{m}
(p^{\rho_1}_{i_1i_2i_3i_4|(S, \tau, \boldsymbol{\theta})} + \ldots + 
p^{\rho_m}_{i_1i_2i_3i_4|(S, \tau, \boldsymbol{\theta})}) + 
\delta(z_{i_1i_2i_3i_4|\boldsymbol{\theta}}),
$$
where 
$p^{\rho_j}_{i_1i_2i_3i_4|(S, \tau, \boldsymbol{\theta})}$
is the probability of observing $i_1i_2i_3i_4$ from 
the multispecies coalescent model with scaling factor $\rho_j$ and 
$z_{i_1i_2i_3i_4|\boldsymbol{\theta}}$ 
is the probability of observing this state an an invariant site.
If $S$ has a $(c,d)$ cherry as above, then 
each summand is contained in the
linear space defined by the linear relations of the form
$p_{\star\star i_3i_4} - p_{\star\star i_4i_3}$
in the distribution space. 
The sum satisfies these relations
as well, so we have
$$
rank(Flat_{ab|cd}(P^{\Gamma + \delta}_{(S, \boldsymbol{\tau}, \boldsymbol{\theta})}))
\leq { \kappa + 1 \choose 2}.
$$

For a non-equidistant tree, the same result
no longer applies. 
If we extend a leaf in the cherry of $S$ as we did in Theorem
\ref{thm: identifiability10}, we will still have 
$$rank(Flat_{ab|cd}(P^{\rho_j}_{(S, \boldsymbol{\tau}, \boldsymbol{\theta})}))
\leq  { \kappa + 1 \choose 2},$$
but the particular linear relationships satisfied by the columns
of each flattening matrix will depend on
the entries of the transition matrix on the extended edge, which
in turn depend on the $\rho_i$. 
However, we can obtain an analogous result for 
JC+I+$\Gamma$, where the JC refers to the 
$\kappa$-state Jukes-Cantor model. 
When $\kappa = 4$, we prove the result
for $m = 2,3,$ and $4$, as four is the most common
number of categories used in actual phylogenetic
applications \cite{liogoldman1998}.

 \begin{thm} 
\label{thm: gamma}
Let $S$ be a four-taxon symmetric $((a,b),(c,d))$ or 
asymmetric $(a,(b,(c,d))$ species tree with a cherry $(c,d)$. 
Let $L_1|L_2$ be one of the splits $ab|cd$, $ac|bd$, or $ad|bc$.
For $\kappa \geq 4$, consider the $\kappa$-state $m$-discrete 
JC+I+$\Gamma$ 
model under the coalescent with species tree $S$ and $m \leq 4$.

\begin{enumerate}
\item If $L_1|L_2$ is a valid split for $S$, then for all 
$P^{\Gamma+I}_{(S,\boldsymbol{\tau},\boldsymbol{\theta})} \in   \mathcal{C}^*(S,N)$,
$$rank(Flat_{L_1|L_2}(P_{(S,\boldsymbol{\tau},\boldsymbol{\theta})}^{\Gamma+I})) 
\leq \kappa^2 - {\kappa - 1 \choose 2}$$
\item If $L_1|L_2$ is not a valid split for $S$, 
then for generic distributions 
$P_{(S,\boldsymbol{\tau})}^{\Gamma+I} \in \mathcal{C}^*(S,N)$,
$$rank(Flat_{L_1|L_2}(P_{(S,\boldsymbol{\tau},\boldsymbol{\theta})}^{\Gamma+I})  )
> \kappa^2 - {\kappa - 1 \choose 2}$$
\end{enumerate}
\end{thm}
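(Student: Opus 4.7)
The plan for part (1) is to decompose
$$P^{\Gamma+I}_{(S,\boldsymbol{\tau},\boldsymbol{\theta})} = \delta Z + \frac{1-\delta}{m}\sum_{j=1}^{m} P^{\rho_j}_{(S,\boldsymbol{\tau},\boldsymbol{\theta})},$$
where $P^{\rho_j}$ is the clockless $p$-coalescent distribution under $\kappa$-state JC with rate matrix $Q$ scaled by $\rho_j$, and $Z$ is the uniform distribution on constant patterns. Since the rank of a sum of matrices is bounded by the codimension of any common right null space, it suffices to exhibit a subspace $K \subseteq \mathbb{R}^{\kappa^2}$ of the column-index space of dimension $\binom{\kappa-1}{2}$ lying in the kernel of every $Flat_{L_1|L_2}(P^{\rho_j})$ and of $Flat_{L_1|L_2}(Z)$.

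I would take $K = \wedge^2 V'$, where $V = \mathbb{R}^\kappa$ decomposes under its standard $S_\kappa$-action as $\mathbf{1} \oplus V'$ with $V'$ the $(\kappa-1)$-dimensional standard irreducible; note $\dim \wedge^2 V' = \binom{\kappa-1}{2}$, and every $\omega \in \wedge^2 V'$ is an antisymmetric tensor with $\omega_{(i,i)}=0$ for all $i$. The diagonal vanishing makes $Flat(Z)\omega = 0$ immediate, since $Z$ is supported on constant patterns. For each $P^{\rho_j}$, apply the branch-extension argument from the proof of Theorem \ref{thm: identifiability10}, first reducing to the tree with $\tau_c=\tau_d=0$; at this reduced tree, cherry-exchange symmetry $p_{abcd}=p_{abdc}$ yields $\wedge^2 V \subseteq \ker Flat(P^{\rho_j,0})$, and in particular $\wedge^2 V' \subseteq \ker Flat(P^{\rho_j,0})$. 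Re-extending the cherry edges amounts to right-multiplying the flattening by $M_c^{\rho_j} \otimes M_d^{\rho_j}$ with $M_e^{\rho_j} = e^{\rho_j Q \tau_e}$. The crucial feature of JC is that $Q$ acts as the scalar $-\mu$ on $V'$, so each $M_e^{\rho_j}$ acts as a scalar on $V'$ and $M_c^{\rho_j}\otimes M_d^{\rho_j}$ acts as the single scalar $e^{-\rho_j\mu(\tau_c+\tau_d)}$ on the entirety of $V'\otimes V'$, preserving $\wedge^2 V'$ as an eigenspace. Hence for $\omega \in \wedge^2 V'$,
$$Flat(P^{\rho_j,\boldsymbol{\tau}})\omega = Flat(P^{\rho_j,0})\bigl(M_c^{\rho_j}\otimes M_d^{\rho_j}\bigr)\omega = e^{-\rho_j\mu(\tau_c+\tau_d)}\,Flat(P^{\rho_j,0})\omega = 0,$$
and summing the $m+1$ contributions gives $\wedge^2 V' \subseteq \ker Flat(P^{\Gamma+I})$, so $rank(Flat_{L_1|L_2}(P^{\Gamma+I})) \leq \kappa^2 - \binom{\kappa-1}{2}$.

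For part (2), specialize to $\delta=0$, $m=1$, $\rho_1=1$, which recovers the pure JC clockless $p$-coalescent distribution; Theorem \ref{thm: identifiability16} (applied to JC as a submodel of GTR) provides explicit parameters at which the invalid-split flattening attains rank $\kappa^2$, so in particular some $((\kappa^2-\binom{\kappa-1}{2})+1)$-minor of the flattening is nonzero at this point. The full parameterization of $P^{\Gamma+I}$ in $\delta$, $\alpha$, $\beta$, and the coalescent/substitution parameters is real analytic (using the explicit formulas for $\rho_i$ reproduced from \cite{yang1994} together with Proposition \ref{prop: analytic}), so the real-analytic-variety argument recalled in Section \ref{sec: Identifiability} promotes this single nonvanishing to generic nonvanishing, yielding the desired strict inequality.

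The hardest step is establishing $\wedge^2 V' \subseteq \ker Flat(P^{\rho_j,\boldsymbol{\tau}})$ for unequal $\tau_c, \tau_d$: this requires simultaneously the full antisymmetric kernel $\wedge^2 V$ at the reduced tree and the JC ``scalar-on-$V'$'' property, which is exactly what guarantees that the $\wedge^2 V'$-part of $\wedge^2 V$ survives asymmetric extension (the complementary $V'$-part of $\wedge^2 V$ is in general destroyed once $\tau_c \neq \tau_d$). This specialness of JC is also why the bound does not extend to K2P, K3P, HKY85, TN93, or GTR, whose rate matrices do not act as a scalar on the non-trivial isotypic component of $V$.
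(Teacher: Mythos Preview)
Your argument for part (1) is correct and is a clean representation-theoretic repackaging of the paper's proof. The paper writes down, for each triple $\{k_1,k_2,k_3\}\subseteq[\kappa]$, the explicit six-term relation
\[
p_{\star\star k_1k_2}-p_{\star\star k_1k_3}-p_{\star\star k_2k_1}+p_{\star\star k_2k_3}+p_{\star\star k_3k_1}-p_{\star\star k_3k_2}=0,
\]
checks that it is independent of the entries of the JC transition matrix on the extended edge (and hence is satisfied by every $P^{\rho_j}$ and by $Z$), and then selects the $\binom{\kappa-1}{2}$ triples containing $\kappa$ to get an independent set. These relations are precisely a basis of your $\wedge^2 V'$, viewed as antisymmetric $\kappa\times\kappa$ matrices with zero row sums; your observation that $M_c^{\rho_j}\otimes M_d^{\rho_j}$ acts as a scalar on $V'\otimes V'$ is exactly what makes the paper's relation independent of $a,b$. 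So the two proofs of (1) are the same argument in different language, yours being more structural.

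Part (2), however, has a real gap. The integer $m$ is a fixed discrete model choice, not a continuous parameter you may specialize; for a given $m\geq 2$ there is no point in the $(\alpha,\beta)$-parameter space of the discrete-gamma construction at which all $\rho_i$ coincide, so you cannot land on the pure $m=1$ JC model and invoke Theorem~\ref{thm: identifiability16}. (A limit $\alpha\to\infty$ does collapse the rates, but the analytic-variety argument needs an interior nonvanishing point, not a boundary limit.) Moreover, Theorem~\ref{thm: identifiability16} is stated and proved only for $\kappa=4$, so it does not supply the ``rank $=\kappa^2$'' witness you claim for general $\kappa$. The paper handles (2) differently: for $\kappa\geq 5$ it invokes \cite[Theorem~5.1]{chifmankubatko2015}, which already gives $\mathrm{rank}>\kappa^2-\kappa$ for \emph{all} $m$, and then notes $\kappa^2-\kappa\geq \kappa^2-\binom{\kappa-1}{2}$ once $\kappa\geq 5$; for $\kappa=4$ (where that inequality fails) it performs a direct computational check at $\alpha=\beta=1$ separately for each $m\in\{2,3,4\}$. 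Your proof needs one of these two ingredients to close the gap.
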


\begin{proof}
Let $L_1|L_2$ be the split $ab|cd$ that is valid for $S$
 and consider the distribution
$P_{(S,\boldsymbol{\xi},\boldsymbol{\theta})}$ from the Jukes-Cantor model.
Without loss of generality,
suppose $\xi_c \geq \xi_d$.
Construct the vector $\tau$ with all entries equal
to those of $\boldsymbol{\xi}$ but with 
$\tau_c = \xi_d$. 
Again, by symmetry, we have
$$p_{\star\star i_3 i_4|(S,\boldsymbol{\tau},\boldsymbol{\theta})} =
p_{\star\star i_4 i_3|(S,\boldsymbol{\tau},\boldsymbol{\theta})}.
$$
As in Theorem \ref{thm: identifiability10}, we will identify the
tree $(S,\boldsymbol{\xi})$ as an extension of $(S,\boldsymbol{\tau})$.
For the JC model, there are only two distinct 
entries of $M = e^{Q(\xi_c - \xi_d)}$.
Let $M_{ij} = a$ if $i = j$ and $b$ otherwise.
Therefore, we have
$$ 
p_{\star\star i_3i_4|(S,\boldsymbol{\xi},\boldsymbol{\theta})}=
a p_{\star \star i_3 i_4|(S,\boldsymbol{\tau},\boldsymbol{\theta})} + 
\displaystyle 
\sum_{i_3 \not = j}
\displaystyle
b p_{\star \star j i_4|(S,\boldsymbol{\tau},\boldsymbol{\theta})},$$ 
and one can check that for distinct
$k_1,k_2,k _3 \in [\kappa]$, the distribution 
 $P_{|(S, \xi, \boldsymbol{\theta})}$, satisfies
$$
p_{\star\star k_1k_2} -
p_{\star\star k_1k_3} -
p_{\star\star k_2k_1} + 
p_{\star\star k_2k_3} + 
p_{\star\star k_3k_1} -
p_{\star\star k_3k_2}=0.
$$
We obtain such a relation for any 3-element subset of $\kappa$.
Moreover, since this linear relation does not depend on
 $a$ or $b$, it is satisfied by
 $P^{\rho_i}_{|(S, \xi, \boldsymbol{\theta})}$,
by $Z_{|\boldsymbol{\theta}}$, 
and hence, by any distribution from the 
$m$-discrete JC+I+$\Gamma$ model.
Consider the ${\kappa - 1 \choose 2}$ relations that come
from choosing 3-element subsets of the form 
$\{k_1,k_2,\kappa\}$.
For all $k_1,k_2 \in [\kappa -1]$, 
exactly one of these relations involves the variable
$p_{\star\star k_1k_2}$.
Therefore, these relations are linearly independent, and so
(1) follows.

In \cite[Theorem 5.1]{chifmankubatko2015} the authors show that 
for all $m$, when $\kappa \geq 4$,
 if $L_1|L_2$ is
not a valid split for $S$, then 
$$
rank(Flat_{L_1|L_2}(P_{(S,\boldsymbol{\tau},\boldsymbol{\theta})}^{\Gamma+I})  
> \kappa^2 - \kappa.
$$
When $\kappa \geq 5 $, we have 
 \begin{align*}
 \kappa^2 - \kappa
 &\geq
\kappa^2 - {\kappa - 1 \choose 2}, \\
 \end{align*}
which establishes our result.
For $\kappa \leq 4$, we must produce a choice of parameters to prove that the claim holds for $m=2,3,$ and $4$ and for both the symmetric and asymmetric trees. 
Choosing  
$\alpha = \beta = 1$
and the same continuous JC69 parameters from 
Theorem \ref{thm: identifiability16} establishes the result.
 \end{proof}

Since all of the parameterization functions involved
are analytic, this is enough to prove the identifiability of the 
unrooted species tree parameter of the JC+I+$\Gamma$ model.
Thus, we have the following corollary.

\begin{cor}
The unrooted species tree parameters of 
the clockless coalescent, the $p$-coalescent,
and the clockless $p$-coalescent models on an 
$n$-leaf tree 
are generically
identifiable for all $n$ 
when the underlying nucleotide substitution model is the
$m$-discrete $\kappa$-state JC+I+$\Gamma$ model with
$\kappa \geq 5$ and $m \in \mathbb{N}$ and with $\kappa = 4$ and $m = 2,3,$ or $4$.
\end{cor}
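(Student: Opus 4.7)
The plan is to mirror the proof of the earlier corollary, combining the flattening-rank dichotomy supplied by Theorem \ref{thm: gamma} with the real analytic variety framework from the beginning of Section \ref{sec: Identifiability} and the four-leaf-suffices reduction of Proposition \ref{4leavesisenough}. Concretely, fix two $n$-leaf rooted species trees $S_1,S_2$ whose unrooted topologies differ; then there exist four taxa $\{a,b,c,d\}$ and a split $L_1|L_2$ of them that is valid for $S_{1|\{a,b,c,d\}}$ but not for $S_{2|\{a,b,c,d\}}$. Marginalizing over the remaining taxa sends $\mathcal{C}^*(S_i,N)$ into a set of four-leaf distributions; the only wrinkle is that for the $p$-coalescent the restricted tree may inherit several effective population sizes along a single edge (bounded by $2n-3$ as noted after Theorem \ref{thm: identifiability16}), but this is harmless since Theorem \ref{thm: gamma} was proved using only the symmetry-extension argument, which is insensitive to the number of $N$-bands on each branch.

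Next I apply Theorem \ref{thm: gamma}: on $\mathcal{C}^*(S_1,N)$ every $(\kappa^2-{\kappa - 1 \choose 2}+1)$-minor of $Flat_{L_1|L_2}$ vanishes identically, while on $\mathcal{C}^*(S_2,N)$ some such minor $f$ is generically nonzero. Viewing $f$ as a polynomial in the entries of the probability tensor, the composition $f \circ \psi_{S_2}$ becomes a function on the continuous parameter space of $S_2$ under JC+I+$\Gamma$. I then check that this composition is real analytic, so by the standard fact \cite{Mityagin2015} its vanishing locus has Lebesgue measure zero. The analyticity of $\psi_{S_2}$ rests on three observations: the $\kappa$-state Jukes-Cantor transition entries are the elementary exponential expressions displayed in the excerpt; the $m$ discretized gamma rates $\rho_i$ are analytic in $(\alpha,\beta)$ because the incomplete gamma function $I(z,\alpha)$ is analytic in its arguments and the last rate $\rho_m = m\alpha/\beta - \sum_{i<m}\rho_i$ inherits this analyticity; and the invariant-site weight $\delta$ enters only through the linear convex combination displayed just before Theorem \ref{thm: gamma}. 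Since the coalescent integrals at a fixed rate scaling $\rho_i$ are elementary exponential expressions in branch lengths and effective population sizes exactly as in Proposition \ref{prop: analytic}, summing $m+1$ such analytic contributions preserves analyticity.

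Taking a union over the finitely many pairs of unrooted topologies and the finitely many four-leaf witnesses, I conclude that the parameters of $S_2$ that land inside $\mathcal{C}^*(S_1,N) \cap \mathcal{C}^*(S_2,N)$ form a set of Lebesgue measure zero in the parameter space of $S_2$, which is precisely generic identifiability for all three modified models. The main obstacle, such as it is, lies in the analyticity bookkeeping for the gamma mixture and the verification that the witnessing parameters from Theorem \ref{thm: gamma} (namely $\alpha=\beta=1$ together with the JC69 choice of rate matrix and $N$-values from Theorem \ref{thm: identifiability16}) do double duty for all three models simultaneously; this works because that point already corresponds to an equidistant tree in which one may freely set the multiple $N$-parameters on a single edge to be equal, so it witnesses the $p$-coalescent, the clockless coalescent, and the clockless $p$-coalescent all at once. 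No GTR-style circumlocution (as in Proposition \ref{prop: GTRanalytic}) is needed here, since the JC transition entries are analytic on the whole parameter region.
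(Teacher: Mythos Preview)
Your proposal is correct and follows precisely the route the paper takes: combine the rank dichotomy of Theorem \ref{thm: gamma} with the analytic-variety argument from the beginning of Section \ref{sec: Identifiability}, the four-leaf reduction of Proposition \ref{4leavesisenough}, and the analyticity of $\psi_S$ (which here is immediate from the explicit Jukes-Cantor transition formulas and the analyticity of the discretized gamma rates). The paper's own justification is terse---essentially a single sentence asserting that all parameterization functions are analytic---so your write-up is a faithful and more explicit elaboration of the same argument, including the same observations about the multiple $N$-bands and the choice of an equidistant witnessing point.
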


Moreover, the parameters that we used to demonstrate
that the invalid flattenings are full rank come from 
an exponential distribution. Therefore, the same result holds 
for a model where the $m$ rates are constructed from 
an exponential distribution. In fact, this also applies to a more
general variable rates model where the $m$ rates are free
parameters.

\section{Conclusions
}

\label{sec: Conclusions}

In the previous section, we have proven that the unrooted
species tree parameters of several more generalized versions of the multispecies coalescent model are generically identifiable.
Moreover, the means by which we have proven 
identifiability give us the necessary framework for 
reconstructing the unrooted species tree from data. 
In each case,
we showed that we can reconstruct the unrooted quartets
of the species tree parameter if we know the distribution exactly 
by taking ranks of the flattening matrices. 
Specifically, for a 4-state model and generic choices of parameters, 
we showed that the rank of the flattening matrix
for the quartet compatible with
the species tree will be less than or equal to $10$ 
while the other two flattening matrices will both be rank 16. 

This gives a natural method for inferring the unrooted species
tree from actual biological data. Specifically, for each quartet, we infer the unrooted quartets of the species
tree by determining which of the three flattening matrices
 is closest to being rank 10. The method of singular value decomposition from linear algebra already provides a means of determining how close a matrix is to being of a certain rank \cite{golubvanloan2013}.
 This is exactly the procedure used by the method SVDQuartets, which is already fully implemented in the PAUP$^*$ software \cite{swofford2016}. Hence, there is strong theoretical justification for applying SVDQuartets for phylogenetic 
 reconstruction even when effective population sizes vary throughout the tree or when the molecular clock does not hold.

The model presented in Section 2, as well as that presented in Chifman and Kubatko (2015)\nocite{chifmankubatko2015}, describes the situation in which gene trees are randomly sampled under the multispecies coalescent model, and then sequence data for a single site evolve along each sampled gene tree according to one of the standard nucleotide substitution models. Data generated in this way have been termed ``coalescent independent sites'' \cite{tiankubatko2016} to distinguish them from SNP data. Although coalescent independent sites and SNP data refer to observations of single sites that are assumed to be conditionally independent samples from the model given the species tree, SNP data are generally biallelic, while coalescent independent sites may include three or four nucleotides at a site, or may be constant. Thus SNP data are a subset of coalescent independent sites data, and the results derived here apply to these data as well.  

The other situation in which one might wish to apply these results is to multilocus data. Multilocus data are data in which individual genes are sampled from the species tree under the multispecies coalescent, but for each sampled gene tree, many individual sites are observed.  Typical genes observed in phylogenomic studies range from 100 base pairs (bp) to 2,000bp in size, though most are less than 500bp. The site patterns observed within a gene are not independent observations under the model because they share the same gene tree, and thus it is not immediately obvious that the results presented here apply to this case. However, consider the case in which a large sample of genes, say $W$, is obtained, and for each gene, $S$ sites are observed. Then, the flattening matrices of site pattern counts constructed from such data will be $S$ times the flattening matrix of site pattern counts that would have been observed if only a single site had been observed from each gene tree, which does not change the matrix rank. It is clear that as $W \rightarrow \infty$,  the correct theoretical distribution will be well-approximated by the observed site pattern frequencies, and the results presented here will hold. In practice, the genes will vary in their lengths and a more careful argument is required. We have elsewhere carried out thorough simulation studies to show that the methods used in SVDQuartets hold for multilocus data as well as for SNP data and for coalescent independent sites for the original model \cite{chifmankubatko2014}, and we verify these results under the model presented here in comparison with other species tree estimation methods elsewhere \cite{longkubatko2017}.

 We note two possible criticisms of this method. The first is that, while we showed that generically the flattening matrices for the invalid splits will be rank 16, we have no theoretical guarantees that they are not arbitrarily close to the set of rank 10 matrices. 
 Therefore, we do not know a priori that this method will 
provide any insight with a finite amount of either simulated or actual biological data. Along the same lines, determining
 that a flattening matrix is close to the set of rank 10 matrices
 does not necessarily mean that it is close to the set of distributions arising from a coalescent model, as the latter is properly contained in the former.  While both valid considerations, 
they appear to be academic, as SVDQuartets has already
been shown to be an effective reconstruction method on 
several data sets, both real and simulated 
\cite{chifmankubatko2014,chouetal2015}.
As mentioned above, in a forthcoming paper, we will demonstrate that SVDQuartets
 also works well in practice by simulating data from these modified coalescent
models and applying the method to real biological data sets known to violate
the molecular clock.

In recent years, the amount of sequence data available for species tree inference has increased rapidly, presenting significant computational challenges for most model-based species tree inference methods that accommodate the coalescent process.   The SVDQuartets method, however, is fully model-based but inference using this method is much more computationally efficient than other methods.  This is because, for each quartet considered, all that is required is construction of the three flattening matrices, which involves the simple task of counting site patterns, and computation of singular values from these $16 \times 16$ matrices.  In addition, increases in sequence length benefit the performance of the method (because site pattern probabilities are estimated more accurately) with almost no increased computational cost, in direct contrast to other species tree inference methods, such as *BEAST \cite{heleddrummond2010} and SNAPP \cite{bryantetal2012}.  In the work presented here, we show that the theory underlying the SVDQuartets method holds in much more general settings than originally suggested. In particular, the method can be applied to data that violate the molecular clock and to the case in which each branch of the species has a distinct effective population size. Thus, this work is a significant advance that will contribute meaningfully to the collection of methods available to infer species-level phylogenies from phylogenomic data in very general settings.


%
\bibliography{bibfile}

\begin{thebibliography}{10}

\bibitem{allmanetal2008}
E.~S. Allman, C.~An\'{e}, and J.~A. Rhodes.
\newblock Identifiability of a {M}arkovian model of molecular evolution with
  gamma-distributed rates.
\newblock {\em Advances in Applied Probability}, 40:228--249, 2008.

\bibitem{bryantetal2012}
D.~Bryant, R.~Bouckaert, J.~Felsenstein, N.~Rosenberg, and A.~Roy~Choudhury.
\newblock Inferring species trees directly from biallelic genetic markers:
  bypassing gene trees in a full coalescent analysis.
\newblock {\em Mol. Biol. Evol.}, 29(8):1917--1932, 2012.

\bibitem{charlesworth2009}
B.~Charlesworth.
\newblock Effective population size and patterns of molecular evolution and
  variation.
\newblock {\em Nature Reviews Genetics}, 10:195--205, 2009.

\bibitem{chifmankubatko2014}
J.~Chifman and L.~Kubatko.
\newblock Quartet inference from {SNP} data under the coalescent model.
\newblock {\em Bioinformatics}, 30(23):3317--3324, 2014.

\bibitem{chifmankubatko2015}
J.~Chifman and L.~Kubatko.
\newblock Identifiability of the unrooted species tree topology under the
  coalescent model with time specific rate variation and invariable sites.
\newblock {\em Journal of Theoretical Biology}, 374:35--47, 2015.

\bibitem{chouetal2015}
J.~Chou, A.~Gupta, S.~Yaduvanshi, R.~Davidson, M.~Nute, S.~Mirarab, and
  T.~Warnow.
\newblock A comparative study of {SVDQ}uartets and other coalescent-based
  species tree estimation methods.
\newblock {\em BMC Genomics}, 16(Suppl 10):S2, 2015.

\bibitem{felsenstein1981}
J.~Felsenstein.
\newblock Evolutionary trees from {DNA} sequences: {A} maximum likelihood
  approach.
\newblock {\em J. Mol. Evol.}, 17(6):368--76, 1981.

\bibitem{golubvanloan2013}
G.~H. Golub and C.~F.~V. Loan.
\newblock {\em Matrix Computation}.
\newblock Johns Hopkins University Press, 4th edition, 2013.
\newblock Section 2.4.

\bibitem{heleddrummond2010}
J.~Heled and A.~J. Drummond.
\newblock Bayesian inference of species trees from multilocus data.
\newblock {\em Mol. Biol. Evol.}, 27(3):570--580, 2010.

\bibitem{Hoffman1971}
K.~Hoffman and R.~Kunze.
\newblock {\em Linear Algebra}.
\newblock Prentice Hall, second edition, 1971.

\bibitem{kingman1982c}
J.~F.~C. Kingman.
\newblock {Exchangeability and the evolution of large populations. Pp. 97-112
  \emph{in} G. Koch and F. Spizzichino, eds. Exchangeability in probability and
  statistics. North-Holland: Amsterdam.}, 1982.

\bibitem{kingman1982a}
J.~F.~C. Kingman.
\newblock {On the genealogy of large populations}.
\newblock {\em J. Appl. Prob.}, 19A:27--43, 1982.

\bibitem{kingman1982b}
J.~F.~C. Kingman.
\newblock {The coalescent}.
\newblock {\em Stoch. Proc. Appl.}, 13:235--248, 1982.

\bibitem{liogoldman1998}
P.~Lio and N.~Goldman.
\newblock Models of molecular evolution and phylogeny.
\newblock {\em Genome Research}, 8:1233--1244, 1998.

\bibitem{liuetal2010}
L.~Liu, L.~Yu, and S.~Edwards.
\newblock A maximum pseudo-likelihood approach for estimating species trees
  under the coalescent model.
\newblock {\em BMC Evol. Biol.}, 10(302), 2010.

\bibitem{liuetal2009b}
L.~Liu, L.~Yu, D.~Pearl, and S.~Edwards.
\newblock Estimating species phylogenies using coalescence times among
  sequences.
\newblock {\em Syst. Biol.}, 58(5):468--477, 2009.

\bibitem{longkubatko2017}
C.~Long and L.~Kubatko.
\newblock Extension of {SVDQ}uartets to general multispecies coalescent models,
  and comparison with other coalescent-based species tree estimation methods.
\newblock {\em in preparation}, 2017.

\bibitem{maddison1997}
W.~P. Maddison.
\newblock {Gene trees in species trees}.
\newblock {\em Syst. Biol.}, 46:523--536, 1997.

\bibitem{martinpalumbi1993}
A.~P. Martin and S.~R. Palumbi.
\newblock Body size, metabolic rate, generation time, and the molecular clock.
\newblock {\em Proceedings of the National Academy of Sciences USA},
  90:4087--4091, 1993.

\bibitem{mirarabetal2014}
S.~Mirarab, R.~Reaz, M.~D. Bayzid, T.~Zimmermann, M.~S. Swenson, and T.~Warnow.
\newblock Astral: Genome-scale coalescent-based species tree.
\newblock {\em Bioinformatics (ECCB special issue)}, 30(17):i541--i548, 2014.

\bibitem{mirarabwarnow2015}
S.~Mirarab and T.~Warnow.
\newblock Astral-ii: Coalescent-based species tree estimation with many
  hundreds of taxa and thousands of genes.
\newblock {\em Bioinformatics (ISMB special issue)}, 31(12):i44--i52, 2015.

\bibitem{Mityagin2015}
B.~Mityagin.
\newblock The zero set of a real analytic function.
\newblock {\em arXiv:1512.07276}, 2015.

\bibitem{pamilonei1988}
P.~Pamilo and M.~Nei.
\newblock {Relationships between gene trees and species trees}.
\newblock {\em Mol. Biol. Evol.}, 5(5):568--583, 1988.

\bibitem{Semple2003}
C.~Semple and M.~Steel.
\newblock {\em Phylogenetics}.
\newblock Oxford University Press, Oxford, 2003.

\bibitem{Swofford2002}
D.~Swofford.
\newblock {\em {PAUP}$^*$. Phylogenetic Analysis Using Parsimony ($^*$and Other
  Methods). Version 4}.
\newblock Sinauer Associates, Sunderland, Massachusetts, 2002.

\bibitem{swofford2016}
D.~Swofford.
\newblock {\em {PAUP}$^*$. Phylogenetic Analysis Using Parsimony ($^*$and Other
  Methods). Version 4a150}.
\newblock 2016.

\bibitem{syvanen1994}
M.~Syvanen.
\newblock Horizontal gene transfer: evidence and possible consequences.
\newblock {\em Annu. Rev. Genet.}, 28:237--261, 1994.

\bibitem{tajima1983}
F.~Tajima.
\newblock {Evolutionary relationship of DNA sequences in finite populations}.
\newblock {\em Genetics}, 105:437--460, 1983.

\bibitem{takahata1985}
N.~Takahata and M.~Nei.
\newblock {Gene genealogy and variance of interpopulational nucleotide
  differences}.
\newblock {\em Genetics}, 110:325--344, June 1985.

\bibitem{tavare1984}
S.~{Tavar{\'{e}}}.
\newblock {Line-of-descent and genealogical processes, and their applications
  in population genetics models.}
\newblock {\em Theor. Popul. Biol.}, 26:119--164, 1984.

\bibitem{tiankubatko2016}
Y.~Tian and L.~Kubatko.
\newblock Rooting phylogenetic trees under the coalescent model using site
  pattern probabilities.
\newblock {\em submitted}, 2016.

\bibitem{wu2012}
Y.~Wu.
\newblock Coalescent-based species tree inference from gene tree topologies
  under incomplete lineage sorting by maximum likelihood.
\newblock {\em Evolution}, 66(3):763--775, 2012.

\bibitem{yang1993}
Z.~Yang.
\newblock Maximum likelihood estimation of phylogeny from {DNA} sequences when
  substitution rates differ over sites.
\newblock {\em Molecular Biology and Evolution}, 10:1396--1401, 1993.

\bibitem{yang1994}
Z.~Yang.
\newblock Maximum likelihood phylogenetic estimation from {DNA} sequences with
  variable rates over sites: Approximate methods.
\newblock {\em Journal of Molecular Evolution}, 39(3):306--314, 1994.

\end{thebibliography}
\bibliographystyle{abbrv}

\end{document}